\documentclass[a4wide]{article}

\usepackage[english]{babel} 
\usepackage[utf8]{inputenc}  
\usepackage[T1]{fontenc}       

\usepackage{amssymb,amsfonts,amsthm}

\usepackage[hscale=0.75,vscale=0.8]{geometry}

\usepackage{enumerate}
\usepackage[all]{xy}
\usepackage[hyphens]{url}
\usepackage{commath}
\usepackage{authblk}

\newcommand{\Q}{{\mathbb Q}}
\newcommand{\Z}{{\mathbb Z}}

\newcommand{\R}{{\mathbb R}}
\newcommand{\OO}{{\mathcal O}}

\DeclareMathOperator{\Disc}{Disc}

\DeclareMathOperator{\Norm}{N}
\DeclareMathOperator{\val}{val}

\DeclareMathOperator{\cont}{cont}

\theoremstyle{plain}
\newtheorem{theo}{Theorem}[section]
\newtheorem{theoremletter}{Theorem}[section]

\newtheorem{lem}[theo]{Lemma}
\newtheorem{prop}[theo]{Proposition}

\theoremstyle{definition}
\newtheorem{deff}[theo]{Definition}
\newtheorem{ex}[theo]{Example}

\theoremstyle{remark}
\newtheorem{rem}[theo]{Remark}

\makeatletter
\newcommand{\pushright}[1]{\ifmeasuring@#1\else\omit\hfill$\displaystyle#1$\fi\ignorespaces}
\newcommand{\pushleft}[1]{\ifmeasuring@#1\else\omit$\displaystyle#1$\hfill\fi\ignorespaces}
\makeatother

\title{Some mathematical remarks \\on the polynomial selection in NFS 
}
\author[1,2]{Razvan Barbulescu \thanks{razvan.barbulescu@inria.fr}}
\author[1,3]{Armand Lachand\thanks{armand.lachand@univ-lorraine.fr}}
\affil[1]{Universit{\'e} de Lorraine}
\affil[2]{CNRS, INRIA}
\affil[3]{Institut Elie Cartan Lorraine (CNRS/Univ. Lorraine)}
\date{}

\begin{document}
\maketitle

\abstract{In this work, we consider the proportion of smooth (free of large prime factors) values of a binary form $F(X_1,X_2)\in\Z[X_1,X_2]$. In a particular case, we give an asymptotic equivalent for this proportion which depends on $F$. This is related to Murphy's $\alpha$ function, which is known in the cryptographic community, but which has not been studied before from a mathematical point of view. Our result proves that, when $\alpha(F)$ is small, $F$ has a high proportion of smooth values. This has consequences on the first step, called polynomial selection, of the Number Field Sieve, the fastest algorithm of integer factorization.}
\section{Introduction}

Smooth -- or friable --numbers, defined as integers whose prime factors are smaller than a given bound, are a celebrated topic in analytic number theory and have a key importance in cryptography today. In this work we are motivated by the Number Field Sieve (NFS), the fastest algorithm of integer factorization~\cite{NFS}. 

Briefly, if $N$ is an integer to be factored, NFS can be summarized as follows. In the first step, called polynomial selection, we select two irreducible polynomials with integer coefficients $f$
and $g$, which have a common root $m$ modulo $N$, i.e. $f(m)\equiv0\equiv
g(m)\pmod N$. In
the next step, we fix a parameter $B$ and we search for $B$ pairs of coprime
integers $(a,b)$ such that $F(a,b):=b^{\deg f}f(a/b)$ and $G(a,b):=b^{\deg
g}g(a/b)$ are $B$-smooth --  an integer $n$ is $B$-smooth if its greatest prime factor, denoted by $P(n)$, satisfies $P(n)\leq B$. The collected pairs allow us to obtain a $B\times B$
linear system over $\Z/2\Z$. Next, we compute a linear combination of the rows of the system. By a square root computation in a number field, we find a non-trivial solution of the equation $x^2\equiv
y^2\bmod N$, which gives a non-trivial factor of $N$.

Computing the complexity of the algorithm requires to find the distribution of coprime pairs $(a,b)$ which are smooth with respect to two binary forms $F$ and $G$, i.e. $F(a,b)$ and $G(a,b)$ are smooth for two irreducible homogeneous polynomials $F$ and $G$ with integer coefficients. In the sequel, small caps letters $f$ and $g$ denote polynomials and capital letters denote the associated binary forms.

 The distribution of $B$-smooth integers has made the object of abundant works (for an overview, we refer to \cite{HT93} and \cite{Gr08}). For example, Hildebrand proved in \cite{Hi86} an asymptotic formula in the region
\begin{equation}\label{Hepsilon}
x\geq 3
,\qquad \exp\left((\log\log x)^{5/3+\varepsilon}\right)\leq B\leq x\tag{$H_{\varepsilon}$}.
\end{equation}
\begin{theoremletter}\label{Theoreme Hildebrand}For any fixed $\varepsilon>0$ and uniformly for $(x,B)$  in the region (\ref{Hepsilon}),  we have 
\begin{equation*}\label{estimation Hildebrand}
\Psi(x,B):=\#\left\{n\in[1,x]: P(n)\leq B 
\right\}=x\rho(u)\left(1+O\left(\frac{\log(u+1)}{\log B}\right)\right),
\end{equation*} 
where  $u:=\frac{\log x}{\log B}$ and  $\rho$ denotes the Dickman function, namely the one defined by the delay differential equation
\begin{align*}
\left\{\begin{array}{ll}
u\rho'(u)+\rho(u-1)=0&\text{ if }u>1,\\
\rho(u)=1&\text{ if }0\leq u \leq 1.
\end{array}
\right.\end{align*}
\end{theoremletter}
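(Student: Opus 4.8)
The plan is to compare $\Psi(x,B)$ with the candidate main term $M(x):=x\rho(u)$, where $u=\log x/\log B$, by showing that both are governed, up to explicitly controllable errors, by the \emph{same} integral equation. The starting point is the exact identity obtained by weighting each $B$-smooth integer $n\le x$ by $\log n$. Writing $\log n=\sum_{p^m\mid n}\log p$ together with $\log x=\log n+\log(x/n)$, and summing over $n\le x$ with $P(n)\le B$, one gets
\begin{equation*}
\Psi(x,B)\log x=\int_1^x\frac{\Psi(t,B)}{t}\,dt+\sum_{\substack{p\le B\\ m\ge1}}(\log p)\,\Psi\!\left(\frac{x}{p^m},B\right).
\end{equation*}
On the other side, integrating the delay equation $u\rho'(u)+\rho(u-1)=0$ yields the integral form $u\rho(u)=\int_{u-1}^u\rho(w)\,dw$, which, after the substitution $x=B^u$, $t=B^v$, translates into the \emph{exact} relation $M(x)\log x=\int_1^B M(x/t)\,dt$. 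The whole argument rests on matching these two equations.

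The comparison proceeds in two reductions. First, the prime number theorem in the form $\sum_{p\le t}\log p=t+O\!\left(t\,e^{-c(\log t)^{3/5}(\log\log t)^{-1/5}}\right)$ lets me replace the prime sum by its integral analogue: by partial summation, $\sum_{p\le B,\,m\ge1}(\log p)\,\Psi(x/p^m,B)=\int_1^B\Psi(x/t,B)\,dt+R_1(x)$, where $R_1$ absorbs both the contribution of the higher prime powers and the error term in the distribution of primes. Subtracting the two functional equations and setting $E(x):=\Psi(x,B)-M(x)$ then gives a closed integral equation
\begin{equation*}
E(x)\log x=\int_1^x\frac{\Psi(t,B)}{t}\,dt+R_1(x)+\int_1^B E(x/t)\,dt,
\end{equation*}
in which the first integral, though genuinely present, is smaller than $M(x)\log x$ by a factor $\log x$ and so acts only as a lower-order source.

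With this equation in hand I would estimate $E$ by induction on $u$. The base case $0\le u\le1$ is immediate, since there $\Psi(x,B)=\lfloor x\rfloor$ and $\rho(u)=1$, so $E(x)=O(1)$. Each unit increase of $u$ propagates the error through the convolution $\int_1^B E(x/t)\,dt$, against which the source terms and the discrepancy $R_1$ are injected at every level. Solving the resulting recursion---dividing by $\log x=u\log B$ at each step and summing the per-level contributions, of relative size $O(1/\log x)$, over the $\sim u$ levels, weighted by the rapid decay of $\rho$---produces the claimed relative error $O(\log(u+1)/\log B)$.

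The technical heart, and the main obstacle, is the \emph{quantitative} control of this accumulation: one must show that the errors injected at the successive levels do not overwhelm the rapidly shrinking main term $M(x)\approx(B/u)^u$. This succeeds only when $u$ is not too large relative to $\log B$, and the precise threshold is exactly the region $(H_\varepsilon)$. The exponent $5/3$ is not accidental: it mirrors the exponent $3/5$ in the best known (Vinogradov--Korobov) zero-free region for $\zeta$, which governs the size of the prime number theorem error $R_1$; it is precisely the balance between this error and the decay rate of $\rho$ that fixes the boundary of validity. Establishing this balance carefully---rather than deriving the two functional equations, which is routine---is where the real work lies.
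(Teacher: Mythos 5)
The paper itself contains no proof of this statement: it is quoted as Theorem~\ref{Theoreme Hildebrand} from Hildebrand's paper \cite{Hi86}, so your attempt can only be measured against Hildebrand's original argument --- and in outline, that is exactly what you have reconstructed. Your structural claims are all correct: the weighted identity
\begin{equation*}
\Psi(x,B)\log x=\int_1^x\frac{\Psi(t,B)}{t}\,dt+\sum_{\substack{p\le B\\ m\ge1}}(\log p)\,\Psi\!\left(\frac{x}{p^m},B\right),
\end{equation*}
the integral form $u\rho(u)=\int_{u-1}^{u}\rho(w)\,dw$ of the delay equation and its rescaling $M(x)\log x=\int_1^B M(x/t)\,dt$, the reduction of the prime sum to $\int_1^B\Psi(x/t,B)\,dt$ via the prime number theorem, and the induction on $u$ constitute precisely the skeleton of \cite{Hi86}; the remark that the exponent $5/3$ in the region $(\ref{Hepsilon})$ is dual to the Vinogradov--Korobov exponent $3/5$ is also accurate.

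The gap is the step you yourself set aside as ``the real work,'' and it is not a routine verification: as you frame it, it would fail. You let $R_1(x)$ ``absorb \dots\ the error term in the distribution of primes'' and then propose to track its accumulation additively through the induction, level by level. But an additive (absolute) bound for this error has size $x\exp\bigl(-c(\log B)^{3/5}(\log\log B)^{-1/5}\bigr)$, whereas the quantity it must not overwhelm is $x\rho(u)\approx x\exp(-u\log u)$. In the region $(\ref{Hepsilon})$ the parameter $u$ can be as large as $\exp\bigl((\log B)^{3/5-\varepsilon'}\bigr)$, and then $u\log u\gg(\log B)^{3/5}$, so $x\rho(u)$ is astronomically smaller than the absolute PNT error; no summation of per-level contributions, however weighted by the decay of $\rho$, can close such an induction. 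What makes Hildebrand's proof work is that the PNT error is forced to enter \emph{relatively}: in the Stieltjes partial summation it multiplies increments of the monotone function $t\mapsto\Psi(x/t,B)$, i.e.\ quantities of the order of the main term itself rather than of $x$, and establishing this uniformly --- together with control of the ratio $-\rho'(u)/\rho(u)=\rho(u-1)/(u\rho(u))\asymp\log u$, which is what actually produces the factor $\log(u+1)$ in the error term --- is the entire content of the proof. Your sketch never confronts this absolute-versus-relative issue, so it is a faithful outline of the known strategy, but the part labelled as bookkeeping is the theorem.
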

A few years later, Saias refined this result by giving an asymptotic expansion of $\Psi(x,B)$.
\begin{theoremletter}[Main corollary,\cite{Sa89}]\label{developpement asymptotique Saias}
There exists $C>0$ such that, for any fixed $J\geq0$, $\varepsilon>0$ and  
 uniformly for $(x,B)$ in the region (\ref{Hepsilon}) and such that
 \begin{align*}
  0<u<J+1\Rightarrow (u-\lfloor u\rfloor) > C(J+1)\frac{\log\log B}{\log B},
\end{align*} we have
  \begin{equation*}\label{estimation Saias}
 \Psi(x,B) =
  x\left(\sum_{j=0}^J\gamma_j\frac{\rho^{(j)}(u)}{(\log B)^j}+
O\left(\rho(u)\left(\frac{\log(u+1)}{\log B}\right)^{J+1}
\right)\right),
\end{equation*}
where  $\gamma_j$ are the coefficients of the Taylor series in $s=0$ of $\frac{s\zeta(s+1)}{s+1}$.
In particular, we have  $
\gamma_1=
\gamma-1.$
\end{theoremletter}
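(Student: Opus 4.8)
The plan is to reduce the statement to a sharp comparison between $\Psi(x,B)$ and a continuous model, and then to expand that model. Introduce the Dickman approximation
\[
\Lambda(x,B)=\sum_{n\le x}\rho\!\left(u-\frac{\log n}{\log B}\right)=\int_{1^-}^{x}\rho\!\left(\frac{\log(x/t)}{\log B}\right)\,d\lfloor t\rfloor,
\]
the integer-weighted analogue of the smooth density $\rho$. The argument then splits into two tasks: first, to show $\Psi(x,B)=\Lambda(x,B)\bigl(1+\text{small}\bigr)$ throughout $(H_\varepsilon)$ with an error much smaller than any power of $1/\log B$; second, to expand $\Lambda(x,B)$ into the series $\sum_{j}\gamma_j\rho^{(j)}(u)/(\log B)^j$. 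Only the second task produces the coefficients $\gamma_j$, while the first is where the prime number theorem and the constraint defining $(H_\varepsilon)$ are consumed.

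For the comparison $\Psi\approx\Lambda$, I would start from the functional equation obtained by weighting smooth numbers with $\log n=\sum_{d\mid n}\Lambda_{\mathrm{vM}}(d)$ and applying partial summation, namely
\[
\Psi(x,B)\log x=\int_{1}^{x}\frac{\Psi(t,B)}{t}\,dt+\sum_{\substack{p\le B,\ k\ge1\\ p^{k}\le x}}(\log p)\,\Psi\!\left(\frac{x}{p^{k}},B\right),
\]
and check that $\Lambda$ satisfies the same identity up to a controlled defect: the prime-power sum is converted to an integral by the prime number theorem, and the delay-differential equation $u\rho'(u)=-\rho(u-1)$ is precisely the continuous shadow of this identity. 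Writing $E=\Psi-\Lambda$, the difference then obeys the same linear recurrence with a forcing term coming from the error in the prime number theorem and from the defect of $\Lambda$. One propagates the bound on $E$ by induction on $u$, taking Theorem~\ref{Theoreme Hildebrand} as the base case for bounded $u$. The region $(H_\varepsilon)$, i.e. $B\ge\exp\bigl((\log\log x)^{5/3+\varepsilon}\bigr)$, is exactly the threshold at which the accumulated prime number theorem error stays negligible after the $\asymp u$ iterations; this error propagation is the main obstacle of the proof.

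For the expansion of $\Lambda$, I would pass to the Mellin transform in $x$. A direct computation with the substitution $w=\log v/\log B$ gives, for $\mathrm{Re}(s)>1$,
\[
\int_{0}^{\infty}\Lambda(x,B)\,x^{-s-1}\,dx=\zeta(s)\,\log B\,\hat\rho(s\log B),\qquad \hat\rho(s)=\int_{0}^{\infty}\rho(v)e^{-sv}\,dv,
\]
so that $\Lambda(x,B)=\frac{1}{2\pi i}\int_{(c)}\zeta(s)\,\log B\,\hat\rho(s\log B)\,x^{s}\,ds$. The analogous transform of the expected main term $x\rho(u)$ is $\log B\,\hat\rho((s-1)\log B)$; comparing the two, shifting the contour across the pole of $\zeta$ at $s=1$, and moving the expansion point to the origin via $s\mapsto s+1$ isolates an arithmetic correction factor $\frac{s\zeta(s+1)}{s+1}$. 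Its Taylor coefficients at $s=0$ are the $\gamma_j$, and under the scaling dictionary $s^{j}\leftrightarrow \rho^{(j)}(u)/(\log B)^{j}$ (since $s$ is conjugate to $(\log B)^{-1}\partial_u$ acting on the main term) the factor expands term by term into $\sum_{j=0}^{J}\gamma_j\rho^{(j)}(u)/(\log B)^{j}$; the Laurent expansion $\zeta(s+1)=s^{-1}+\gamma+O(s)$ gives $\gamma_0=1$ and $\gamma_1=\gamma-1$.

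Finally, truncating at order $J$ leaves a remainder governed by $\rho^{(J+1)}(u)$. Because $\rho^{(j)}$ is only piecewise smooth, with jumps at the integers, this is where the hypothesis $u-\lfloor u\rfloor>C(J+1)\log\log B/\log B$ enters: keeping $u$ bounded away from the integers guarantees $\rho^{(j)}(u)\ll\rho(u)\bigl(\log(u+1)\bigr)^{j}$ uniformly, so that the truncation error is genuinely $O\bigl(\rho(u)(\log(u+1)/\log B)^{J+1}\bigr)$. Combining the comparison error of the first step, which is negligible against this term in $(H_\varepsilon)$, with the truncation error yields the claimed expansion.
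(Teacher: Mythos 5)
You should know at the outset that the paper contains no proof of this statement: Theorem~\ref{developpement asymptotique Saias} is imported verbatim from Saias~\cite{Sa89} (hence the attribution ``Main corollary'') and is only ever invoked as a black box, e.g.\ at the end of the proof of Theorem~\ref{theoreme principal quadratique}. So the only fair comparison is with Saias's published argument, and your outline reconstructs its architecture correctly. Saias's main theorem is exactly your ``first task'': $\Psi(x,B)=\Lambda(x,B)\left(1+O\left(\exp\left(-(\log B)^{3/5-\varepsilon}\right)\right)\right)$ uniformly in the region (\ref{Hepsilon}), where $\Lambda$ is de Bruijn's function, i.e.\ precisely the sum $\sum_{n\le x}\rho\left(u-\frac{\log n}{\log B}\right)$ you introduce; the corollary then follows by expanding $\Lambda$ itself. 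Your identification of the source of the coefficients is also right: comparing $\zeta(s)\log B\,\hat\rho(s\log B)$ with $\log B\,\hat\rho((s-1)\log B)$ and using that $\hat\rho((s+1)\log B)/\hat\rho(s\log B)$ equals $s/(s+1)$ up to exponentially small corrections isolates the factor $\frac{s\zeta(s+1)}{s+1}$, and the fractional-part hypothesis is indeed what shields the expansion from the jump discontinuities of $\rho^{(j)}$ at the integers.

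There is, however, one concrete step in your plan that fails as written: the induction cannot take Theorem~\ref{Theoreme Hildebrand} as its base case. Hildebrand's theorem carries a relative error $O\left(\frac{\log(u+1)}{\log B}\right)$, which is enormous compared with the accuracy you yourself state is needed (``much smaller than any power of $1/\log B$''); since an induction on $u$ through the functional equation can only accumulate error, a base case of this quality can never produce the terms $j\ge1$ of the expansion --- you would merely reprove Theorem~\ref{Theoreme Hildebrand}. The base case must be established directly with the exponentially small error: for $u\le1$ one has $\Psi(x,B)=\Lambda(x,B)=\lfloor x\rfloor$ exactly, and for $1\le u\le2$ one compares $\Psi(x,B)=\lfloor x\rfloor-\sum_{B<p\le x}\lfloor x/p\rfloor$ with $\Lambda(x,B)$ via the prime number theorem with Vinogradov--Korobov error term $O\left(t\exp\left(-c(\log t)^{3/5-\varepsilon}\right)\right)$; the whole difficulty of \cite{Sa89} is then to propagate this quality, not a weaker one, through the iteration uniformly in (\ref{Hepsilon}). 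A secondary caveat: the ``dictionary'' $s^{j}\leftrightarrow\rho^{(j)}(u)/(\log B)^{j}$ hides the boundary terms of the Laplace transforms of $\rho^{(j)}$ (again the integer discontinuities) and the contour shift needs growth estimates for $\hat\rho$ on vertical lines, which is exactly where the hypothesis on $u-\lfloor u\rfloor$ is consumed a second time. With the base case repaired, your plan is, in structure, Saias's proof.
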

 Let $x$ and $B$ be two given integers, $F(X_1,X_2)\in\Z[X_1,X_2]$ a binary form and $\mathcal{K}$ a compact subset of $\R^2$ whose boundary is a continuous closed curve with piecewise continuous derivatives. By $x\mathcal{K}$ we denote the set $\mathcal{K}$ rescaled by a factor $x$. In order to study the distribution of the $B$-smooth integers of the form $F(a,b)$ for coprime integers $a$ and $b$,  we consider the cardinal  $\Psi^{(1)}_{F}(\mathcal{K},x,B)$ defined by
\begin{align*}
\Psi^{(1)}_{F}(\mathcal{K},x,B):=\#\left\{(a,b)\in x\mathcal{K}:\gcd (a,b)=1\text{ and } P(F(a,b))\leq B\right\}.
\end{align*} In \cite{BBDT12}, Balog, Blomer, Dartyge and Tenenbaum developed an argument which can be easily adapted to show the following result. \begin{theoremletter}\label{theoreme BBDT}
Let $\mathcal{K}$ be a compact subset of $\R^2$ whose boundary is a continuous closed curve with piecewise continuous derivatives, $k\geq1$ and $F_1(X_1,X_2),\dots,F_k(X_1,X_2)\in\mathbb{Z}[X_1,X_2]$ 
 some integral and irreducible binary forms of degree  $d_1\geq\dots\geq d_k$.  There exists $u(d_1,\dots,d_k)$ in the interval $\left(1/d_1,+\infty\right)$ with the following property. For any fixed $u<u(d_1,\dots,d_k)$, there exists a constant $c_{F_1,\dots,F_k,\mathcal{K}}(u)$ such that,    
 for $B\geq x^{1/u}\geq2$, we have
\begin{equation*} \label{minoration friable}
 \Psi^{(1)}_{F_1\dots F_k}(\mathcal{K},x,B)\geq
c_{F_1,\dots,F_k,\mathcal{K}}(u) x^2.
\end{equation*}
More precisely, one can take
\begin{equation*}\label{amelioration constante}
 u(d_1,\dots, d_k):=\left\{\begin{array}{ll}
+\infty&\text{if }k\geq2\text{ and } d_1+\dots+d_k\leq3,\\
e^{\frac{1}{2}}&\text{if }k=1\text{ and } d_1=3.
                  \end{array}
\right.
\end{equation*}
\end{theoremletter}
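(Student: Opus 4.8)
The plan is to follow the construction of Balog, Blomer, Dartyge and Tenenbaum \cite{BBDT12} and adapt it to a product of forms. The starting observation is that
$P\bigl(F_1(a,b)\cdots F_k(a,b)\bigr)=\max_{i} P\bigl(F_i(a,b)\bigr)$,
so it suffices to exhibit a family of coprime pairs $(a,b)\in x\mathcal{K}$, of cardinality $\gg x^2$, along which \emph{every} factor $F_i(a,b)$ is simultaneously $B$-friable. Since each $F_i$ is irreducible of degree $d_i$, I would fix a root $\theta_i$ of $f_i(t):=F_i(t,1)$, write $K_i=\Q(\theta_i)$ and, after clearing denominators so that $\theta_i\in\OO_{K_i}$, express $F_i(a,b)=c_i\,N_{K_i/\Q}(a-\theta_i b)$ with $c_i$ the leading coefficient. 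Friability of $F_i(a,b)$ then becomes friability of the norm of the algebraic number $a-\theta_i b$, i.e.\ the requirement that every prime ideal dividing $(a-\theta_i b)$ have norm $\leq B$. The map $(a,b)\mapsto a-\theta_i b$ sends the box onto the rank-$2$ lattice $\Z+\Z\theta_i\subseteq\OO_{K_i}$, and the entire difficulty is that this lattice is \emph{thin} inside $\OO_{K_i}$ as soon as $d_i\geq 3$.

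When every factor has degree at most $2$ --- which is exactly the situation $k\geq 2$, $d_1+\dots+d_k\leq 3$, so that at most one factor is genuinely quadratic --- the lattice $\Z+\Z\theta_i$ has finite index in $\OO_{K_i}$. Here I would invoke the positive density of friable ideals: for any fixed $u$ the number of integral ideals of norm $\leq Y$ all of whose prime divisors have norm $\leq Y^{1/u}$ is $\gg_u Y$, an ideal-theoretic analogue of Theorem~\ref{Theoreme Hildebrand}. Counting the principal ones with a generator in the box and in the correct sublattice costs only a bounded factor (from the unit and class groups of $K_i$). Because the fields $K_i$ are distinct, the splitting of the rational primes in the several $K_i$ is governed by independent Chebotarev conditions, so the friability constraints for the different factors can be imposed at once without losing a positive proportion, which yields $u(d_1,\dots,d_k)=+\infty$ in these cases.

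For a single irreducible cubic ($k=1$, $d_1=3$) the lattice $\Z+\Z\theta$ is genuinely thin, and I would instead produce friable values by \emph{prescribing small prime divisors}. A degree-one prime $\mathfrak p\mid p$ corresponds to a root $r$ of $f\bmod p$, and $\mathfrak p\mid(a-\theta b)$ is the single congruence $a\equiv rb\pmod p$; forcing divisibility by a product of such primes of norm $\leq B$ concentrates $(a,b)$ in one class modulo the product $P$ of the chosen moduli, counted by a geometry-of-numbers estimate of the shape $x^2/P+O(\cdots)$. Summing over admissible prime configurations --- the remaining cofactor being itself controlled for friability against the Dickman density --- and removing the overcounting by a lower-bound (Selberg or combinatorial) sieve produces the bound $\gg x^2$, valid as long as the accumulated main term dominates the remainder; optimising the balance between $B$ and the number of available configurations is what fixes the threshold at $u=e^{1/2}$. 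The same construction, run with $B$ close to $x^{d_1}$, already gives some threshold in $(1/d_1,+\infty)$ for arbitrary degrees, which furnishes the general existence of $u(d_1,\dots,d_k)$. The coprimality condition is recovered by a routine M\"obius inversion over common divisors of $(a,b)$, and the passage from a box to $\mathcal{K}$ follows by approximating $\mathcal{K}$ from inside by a bounded union of squares, the boundary contributing a lower-order term thanks to the regularity hypothesis.

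The main obstacle is the error control in this sieve: bounding the accumulated remainder requires the values $F(a,b)$ to be equidistributed in arithmetic progressions to moduli as large as the prescribed products $P$, a level-of-distribution input, and it is precisely the available level of distribution that caps the admissible range and forbids pushing beyond $e^{1/2}$ for an irreducible cubic. For the product statement the extra difficulty is essentially bookkeeping: one must check that the prescribed-divisor construction for the hardest factor $F_1$ and the friability of the remaining low-degree factors can be arranged on a common sublattice, which again reduces to the independence of the relevant congruence and splitting conditions across the distinct fields $K_i$.
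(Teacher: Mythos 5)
First, a remark on the benchmark: the paper never proves Theorem~\ref{theoreme BBDT} at all --- it is stated as a background result, with the comment that the argument of \cite{BBDT12} ``can be easily adapted'' to yield it. So your proposal can only be measured against the strategy of Balog, Blomer, Dartyge and Tenenbaum, and against that standard it has two genuine gaps, one for each case of the ``more precisely'' clause (which is where all the content lies; your reduction $P(F_1\cdots F_k)=\max_i P(F_i)$ and the general existence of some $u(d_1,\dots,d_k)>1/d_1$ via prescribed small divisors are fine). For $k\geq2$, $d_1+\dots+d_k\leq 3$, your mechanism --- positive density of friable ideals in each $K_i$ plus ``independent Chebotarev conditions'' across distinct fields --- is not a proof, and one can see it cannot be the right mechanism because it proves too much: you never use the hypothesis $d_1+\dots+d_k\leq3$ in any essential way, so the same words would give $u=+\infty$ for a product of two distinct irreducible quadratics (total degree $4$), which neither \cite{BBDT12} nor anyone else claims. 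Worse, for the pattern $(1,1,1)$ all the fields $K_i$ are equal to $\Q$, so there is no ``independence of distinct fields'' to invoke at all, yet simultaneous friability of $a$, $b$, $a+b$ is precisely the hardest split case. Chebotarev controls the splitting of a single prime in several fields; it says nothing about the joint multiplicative structure of correlated integer sequences, and breaking that correlation is the entire theorem. The actual proofs rest on quantitative analytic input (estimates for friable numbers in arithmetic progressions with a usable level of distribution, and exponential-sum/circle-method results over friable numbers), none of which appears in your sketch. A secondary issue in the same paragraph: when a quadratic factor is indefinite, $U_{K_i}$ is infinite and your ``bounded factor from the unit group'' is false as stated; note that the present paper restricts its own Theorem~\ref{theoreme principal quadratique} to negative discriminants precisely to avoid this.

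For $k=1$, $d_1=3$, the construction you describe cannot reach $e^{1/2}$, and the threshold is asserted rather than derived. Concretely: counting pairs $(a,b)\in[1,x]^2$ in residue classes (or root lattices) modulo the prescribed product $P$ requires $P\leq x^{1-\varepsilon}$, or $P\leq x^{2-\varepsilon}$ with averaging over the lattices; hence the unprescribed cofactor satisfies $F(a,b)/P\gg x^{3}/x^{2-\varepsilon}=x^{1+\varepsilon}$, which exceeds every admissible $B=x^{1/u}$ with $u>1$. Its friability is therefore a problem of exactly the same nature as the one you set out to solve, so the step ``the remaining cofactor being itself controlled for friability against the Dickman density'' is circular; and pushing the level of distribution beyond $x^{2}$ for this sequence is a notoriously hard problem, not an optimisation of parameters. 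This is why prescribing divisors alone yields only exponents of the shape $u\leq 1/(d_1-2)$, i.e.\ $u\leq1$ for a cubic, far short of $e^{1/2}\approx1.65$. In \cite{BBDT12} the constant $e^{1/2}$ emerges from a finer argument exploiting that values of an irreducible cubic form are norms, so that large prime factors correspond to degree-one prime ideals dividing $(a-\omega b)$, combined with sieve and Dickman-function analysis on the ideal side; your sketch correctly identifies the thin-lattice obstruction and the relevance of degree-one primes, but the idea that overcomes the obstruction --- and with it the origin of the numerical threshold --- is missing.
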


 

It is common to make the assumption that  integers represented by a given binary form  
have
the same probability to be $B$-smooth as arbitrary integers of the same size. Consequently, in the light of Theorem \ref{Theoreme Hildebrand}, we conjecture that, in a domain to be made precise, we have
\begin{equation}\label{conjecture friabilite}
\Psi^{(1)}_{F_1\dots F_k}(\mathcal{K},x,B)\sim \frac{6}{\pi^2}\mathcal{A}(\mathcal{K})x^2\rho(d_1u)\dots\rho(d_ku),
\end{equation}
where $\mathcal{A}(\mathcal{K})$ denotes the area of $\mathcal{K}$.
A similar formula was proven by the second author (\cite{La1+2} and \cite{La3})   when $d_1+\dots +d_k\leq 3$.

Note that the right hand member of Equation~\ref{conjecture friabilite} does not depend on the binary forms $F_1$, $\ldots$, $F_k$. In the current state of research, it seems out of reach to obtain in the general case an equation in which both members depend on the binary forms. In Theorem \ref{theoreme principal quadratique} we refine Theorem~\ref{theoreme BBDT} in the case $k=1$ and $d_1=2$ by making explicit the first approximation term. Since this term depends on the polynomial $f$, it can be used in the polynomial selection stage of NFS, which is done as follows. Using one of the two methods of Kleinjung~(\cite{Kle06},\cite[Sections 4.1]{Bai11} and \cite{Kle08},\cite[Section 4.2]{Bai11}), one generates a large number of pairs of polynomials $f$ and $g$, such that $f$ is irreducible and $g$ linear. For each pair of polynomials, one computes Murphy's $\mathbb{E}(F,G)$ or Murphy's $\alpha(f)$ for the associated binary forms, as defined in~\cite{Mur99}.  Hence one can make a model of the polynomial selection as a random trial of polynomials from a set 
 \begin{equation}\label{definition EDI}E(d,\textbf{I})=\left\{f=\sum_{i=0}^df_iX^i\in\Z[X]\mid f\text{ is irreducible},\forall i, f_i\in I_i\right\},
\end{equation}
 where $\textbf{I}=\prod_{i=0}^dI_i$ is a $(d+1)$-tuple of intervals.
 
Murphy's $\alpha$ is the main object in this article. It is hard to determine when it was proposed in the cryptographic community, but it was known to Montgomery in 1996~\cite{Boe96}. In his thesis, Murphy~\cite{Mur99} introduced $\alpha(f)$ as the sum of a series and gave evidence that, when $\alpha(f)$ is small, $F$ has a high proportion of smooth values. It is computed using the number of roots of $f$ modulo each prime power $p^k$. Based on $\alpha(f)$, one can compute Murphy's $\mathbb{E}(F,G)$, which takes into account the real roots of $f$ and $g$, but it is more costly to compute and not much more accurate than $\alpha(f)$. Also note that, $\alpha$ does not depend on the linear polynomial $g$ since, based on experiments, one can make the conjecture that $g$ has a small influence on the formula of Equation~\eqref{conjecture friabilite}. A thorough development on the polynomial selection from a cryptographic perspective is due to
Bai~\cite{Bai11}. 

\paragraph{Outline}

In Section~\ref{sec:alpha}, we give a rigorous definition of $\alpha(f)$. The mean value of $\alpha(f)$ over $E(d,\textbf{I})$ will be the main goal of Section~\ref{sec:average}. In the last section, we introduce a modification
of NFS. It allows us to obtain a rigorous result on the proportion of smooth elements in number fields of arbitrary degree and then to show that $\alpha(f)$ effectively occurs in the proportion of smooth values of a binary form of degree~$2$. 

\paragraph{Notation} 

In what follows, $K$ stands for a number field and $d_K$, 
 $\mathcal{O}_K$, $U_K$, $G_K$, $\zeta_K$ and $\lambda_K$ denote respectively its degree, 
  ring of integer, unit group, class group, Dedekind zeta function and  residue of $\zeta_K$. The letters $p$, $\mathfrak{p}$ and $\mathfrak{I}$ denote respectively a rational prime, a prime ideal and an arbitrary ideal of $\mathcal{O}_K$.

\section{Definition and convergence of Murphy's $\alpha(f)$}\label{sec:alpha}
From a cryptographic point of view, Theorem~\ref{theoreme principal quadratique}, proved in Section~\ref{The smoothness probability : imaginary quadratic case}, states that $\alpha(f)$ is a good indicator of a polynomial's efficiency for NFS when $f$ is quadratic. In this section we show that it has two properties which are equally important: it has an easy-to-compute formula and it is defined by a series with a high speed of convergence.

\subsection{Definition of $\alpha(f)$}

Murphy introduced $\alpha$ explicitly for arbitrary polynomials, but he gives credit to Montgomery for using the formula in the case of quadratic polynomials~\cite{Boe96}. One can find the formula of $\alpha$ by the following heuristic argument. For any integer $n$ and bound $C$, the $C$-sifted part of $n$ is the largest divisor of $n$ without prime factors less than $C$. For a bound $B$, the $B$-smooth part of $n$ is the largest $B$-smooth divisor of $n$. Experiments show that one can obtain a good guess of $\Psi^{(1)}_{F}(\mathcal{K},x,B)$ by the following empirical method:
\begin{enumerate}
\item Choose a large constant $C$ and compute the average value $\text{cont}(F,C)$ of the logarithm of the $C$-smooth part of the values of $F$. Define $\alpha(F,C)$ as the average value of the logarithm of the $C$-smooth part of a random integer minus $\text{cont}(F,C)$.
\item Approximate $\Psi^{(1)}_{F}(\mathcal{K},x,B)$ by the cardinality of $x\mathcal{K}$ times the probability of a random $C$-sifted integer of size $\left(\max_{(a,b)\in x\mathcal{K}}|F(a,b)|+\alpha(f,C)\right)$ to be $B$-smooth.
\end{enumerate}
This suggests to define $\alpha$ as in the definition below.  In the sequel, $f$ is
a polynomial in $\Z[X]$ such that $\Disc(f)\neq 0$ and $p$ is a
prime. The associated binary form $F$ is defined by $F(X_1,X_2)=X_2^{\deg(f)}f(X_1/X_2)$. 
\begin{deff} For any prime $p$ we define, if it exists,
\begin{equation}
\alpha_p(f)=(\log p)\left(\frac1{p-1}-\text{cont}_p(f)\right), \label{definition alpha p f}
\end{equation}
with
\begin{equation*}
\text{cont}_p(f)=
\lim_{x\rightarrow\infty}\frac{\sum_{(a,b)\in[1,x]^2,\gcd(a,b,p)=1}\val_p F(a,b)}{\#\left\{(a,b)\in
[1,x]^2: \gcd(a,b,p)=1\right\}}.
\end{equation*}

Under the reserve of proving the convergence of the series below, we define
\begin{equation*}
\alpha(f)=\sum_{p\text{ prime}}\alpha_p(f).
\end{equation*}
\end{deff}

To get an other expression for $\text{cont}_p(f)$, we can split the region 
\begin{align*}
\left\{(a,b)\in
[1,x]^2: \gcd(a,b,p)=1\right\}
\end{align*}
in congruence classes modulo $p^k$ and try to approximate
\begin{align*}
\#\left\{(a,b)\in
[1,x]^2: \gcd(a,b,p)=1, p^k|F(a,b)\right\}
\end{align*}
by 
\begin{align*}
\frac{x^2}{p^{2k}}\#\left\{(a,b)\in
[1,p^k]^2: \gcd(a,b,p)=1, p^k|F(a,b)\right\}
\end{align*}
This procedure is essentially the object of Lemma \ref{prop:alpha_p}.
Before doing this, we can remark that 
\begin{equation}\label{formule gamma F}
\#\left\{(a,b)\in
[1,p^k]^2: \gcd(a,b,p)=1, p^k|F(a,b)\right\}=
\varphi(p^k)n_{p^k}(f),
\end{equation}
where \begin{eqnarray*}
		n_{p^k}(f)&=&\#\left\{ a\in [0,p^k-1]:
	f(a)\equiv 0\mod p^k\right\}\\ 
	&+&\#\left\{b\in [0,p^k-1]:b\equiv0\mod p,~F(1,b)\equiv
	0\mod p^k\right\}.
\end{eqnarray*}

Nagell~\cite{Na21} proved what survives of Hensel's lemma when the hypothesis on the derivative fails. We adapt his result to obtain an upper bound of $n_{p^k}$ in a similar way one would in the case when Hensel's lemma applies. 
\begin{lem}\label{Nagel}
If $p$ does not divide $\Disc(f)$
, then $n_{p^k}(f)=n_p(f)$.
In the general case, for any prime $p$ and $k\geq1$, we have
\begin{align*}
n_{p^k}(f)\leq  2\deg(f)p^{\min(2\val_p(\Disc(f)),k)}.
\end{align*}
\end{lem}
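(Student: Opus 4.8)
The plan is to handle the two summands of $n_{p^k}(f)$ separately: write $A_k$ for the first (affine) term counting $a$ with $f(a)\equiv 0\pmod{p^k}$ and $B_k$ for the second, which detects the root at infinity, so that $n_{p^k}(f)=A_k+B_k$, and set $v=\val_p(\Disc(f))$. For the first assertion I would note that $p\nmid\Disc(f)$ forces both $p\nmid f_d$ (the leading coefficient, since $\Disc(f)=f_d^{2\deg f-2}\prod_{i<j}(\alpha_i-\alpha_j)^2$) and separability of $f$ modulo $p$. The former gives $F(1,b)\equiv f_d\not\equiv 0\pmod p$ whenever $p\mid b$, so $B_k=0$; the latter means every root of $f$ modulo $p$ is simple, so ordinary Hensel lifting produces a unique lift of each to a root modulo $p^k$ and no others. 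Hence $A_k=n_p(f)$ and $n_{p^k}(f)=n_p(f)$.

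For the general upper bound essentially all the content lies in the range $k>2v$: when $k\le 2v$ the trivial estimates $A_k\le p^k$ and $B_k\le p^{k-1}$ already give $n_{p^k}(f)\le 2p^{k}=2p^{\min(2v,k)}\le 2\deg(f)\,p^{\min(2v,k)}$. So suppose $k>2v$ and, for the moment, that $p\nmid f_d$. A Bézout relation $uf+wf'=\Res(f,f')=\pm f_d\Disc(f)$ with $u,w\in\Z[X]$ shows that any root $a$ modulo $p^k$ satisfies $w(a)f'(a)\equiv\pm f_d\Disc(f)\pmod{p^k}$; the right-hand side has valuation $v<k$, forcing $\val_p(f'(a))\le v$. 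In particular $\val_p(f(a))\ge k>2v\ge 2\val_p(f'(a))$, which is exactly the regime covered by Nagell's refinement of Hensel's lemma: $a$ then lies in the residue disc of a unique genuine root $\alpha\in\Z_p$, and conversely the solutions modulo $p^k$ in that disc are precisely the $p^{\val_p(f'(\alpha))}$ lifts of $\alpha$. Summing over the at most $\deg(f)$ roots $\alpha\in\Z_p$ and using $\sum_\alpha\val_p(f'(\alpha))\le v$ yields $A_k\le\deg(f)\,p^{v}\le\deg(f)\,p^{2v}$.

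The term $B_k$ I would control by the reversed form $\tilde f(Y)=Y^{\deg f}f(1/Y)$, whose roots near $0$ are exactly those counted by $B_k$ and which satisfies $\val_p(\Disc(\tilde f))=v$; the same argument gives $B_k\le\deg(f)\,p^{2v}$, and adding the two bounds produces $2\deg(f)\,p^{2v}=2\deg(f)\,p^{\min(2v,k)}$. The delicate point, and where I expect to spend the most care, is the hypothesis $p\nmid f_d$ used in the Bézout step: when $p\mid f_d$ the projective point at infinity degenerates and the naive resultant bound weakens. I would remove this assumption using the invariance of $n_{p^k}$ and of $\val_p(\Disc)$ under a change of coordinates in $\GL_2(\Z_p)$ that sends to infinity some point of $\PP^1(\F_p)$ not lying on $F$ (such a point exists once $p>\deg(f)$ and $f$ is primitive modulo $p$), the finitely many remaining small primes being absorbed by the trivial range $k\le 2v$ or a crude direct count. \textbf{Main obstacle:} controlling the proliferation of solutions above the multiple roots modulo $p$, where Hensel's lemma fails; this is precisely what Nagell's theorem supplies, and the slack in the statement — the factor $2$ (two affine charts), the factor $\deg(f)$ (number of degenerate residues), and the exponent $2v$ rather than $v$ — is exactly what absorbs this together with the leading-coefficient bookkeeping.
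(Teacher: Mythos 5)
Your proposal does not follow the paper's route (the paper's proof is essentially a citation: Nagell's Theorem 1 for the first assertion, the bound from the proof of Nagell's Theorem 2 for the second, each applied to the two charts $F(x,1)$ and $F(1,x)$); instead you reconstruct Nagell's results from scratch, and the reconstruction fails in both places where $p$ divides the leading coefficient $f_d$. The first failure is the claim that ``$p\nmid\Disc(f)$ forces $p\nmid f_d$'', which is false: take $f=pX^2+X+1$, whose discriminant $1-4p$ is prime to $p$ while $p\mid f_d$. (The formula $\Disc(f)=f_d^{2d-2}\prod_{i<j}(\alpha_i-\alpha_j)^2$ does not give the implication, because when $p\mid f_d$ the roots are not $p$-integral and the negative valuations of the differences cancel the valuation of $f_d^{2d-2}$.) For this $f$ one has $B_k=1$ for every $k\geq1$, not $B_k=0$: the reversed polynomial $Y^2+Y+p$ has a simple root $\equiv 0\pmod p$ which Hensel-lifts. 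Your conclusion $n_{p^k}(f)=n_p(f)$ survives, but only because $B_k$ is constant in $k$, and proving that constancy requires running your Hensel argument on the reversed polynomial as well --- i.e.\ treating the chart at infinity exactly like the affine chart, which is what the paper does by applying Nagell's Theorem 1 to both $F(x,1)$ and $F(1,x)$.

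The second and more serious gap is in your removal of the hypothesis $p\nmid f_d$ from the B\'ezout step. A point of $\PP^1(\F_p)$ avoiding $F$ modulo $p$ is guaranteed only when $\deg(f)<p+1$; for smaller primes it can genuinely fail to exist, and then neither of your fallbacks applies. Concretely, take $f=2X^3+X^2+X$ and $p=2$: then $\Disc(f)=-7$, so $v=0$ and the ``trivial range'' $k\le 2v$ is empty, while $F\equiv X_1X_2(X_1+X_2)\pmod 2$ vanishes at all three points of $\PP^1(\F_2)$, so no $\GL_2(\Z_2)$ change of coordinates can make the leading coefficient a unit. What must be proved there is $n_{2^k}(f)\le 6$ uniformly in $k$ (in fact $n_{2^k}(f)=3$ for all $k$), and no ``crude direct count'' yields a bound independent of $k$: one needs the $2$-adic root structure (Newton polygon, or a blow-up at the bad residues), which is precisely the content of Nagell's theorem --- a statement requiring no unit assumption on the leading coefficient, which the paper cites and applies to both charts, the factor $2$ in the lemma being exactly this two-chart splitting. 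Your B\'ezout-plus-refined-Hensel argument is a correct reproof of Nagell's bound when $p\nmid f_d$, and the $\GL_2(\Z_p)$ reduction elegantly handles $p\mid f_d$ for large $p$; but the primes $p<\deg(f)$ dividing $f_d$ remain a real hole, and they are not a negligible corner case, since for them $k>2v$ can be the entire range.
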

\begin{proof}
The first assertion is a direct consequence of [\cite{Na21},Theorem 1] which asserts that
\begin{align*}
\#\left\{ a\in [0,p^k-1]:
	f(a)\equiv 0\mod p^k\right\}= \#\left\{ a\in [0,p-1]:
	f(a)\equiv 0\mod p\right\}.
	\end{align*}
	
	In the proof of   [\cite{Na21},Theorem 2], it is shown that
\begin{align*}
\#\left\{ a\in [0,p^k-1]:
	f(a)\equiv 0\mod p^k\right\}\leq 
	\deg (f)p^{\min(2\val_p(\Disc(f)),k)}.
	\end{align*}	
	When applied to $f(x)=F(x,1)$ and $\overline{f}=F(1,x)$, this implies the second assertion.
	\end{proof}

\begin{prop}\label{prop:alpha_p}
 We have, for every prime $p$,
 \begin{align*}
\alpha_p(f)=\log	p\left(\frac{1}{p-1}-\frac{p}{p+1}\sum_{k\geq
				1}\frac{n_{p^k}(f)}{p^k}\right).
\end{align*}
\end{prop}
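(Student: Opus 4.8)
The plan is to reduce the whole statement to the evaluation of $\text{cont}_p(f)$, since by definition $\alpha_p(f)=(\log p)\bigl(\tfrac1{p-1}-\text{cont}_p(f)\bigr)$, so it suffices to prove $\text{cont}_p(f)=\tfrac{p}{p+1}\sum_{k\geq1}n_{p^k}(f)/p^k$. First I would write $\val_p F(a,b)=\sum_{k\geq1}\mathbf{1}[p^k\mid F(a,b)]$. For a fixed $x$ the inner sum over $k$ is finite, because $p^k\mid F(a,b)$ forces $p^k\le|F(a,b)|\le C_F x^{\deg f}$ whenever $F(a,b)\ne0$, so only $k\leq K(x)=O(\log x)$ contribute and the two summations may be freely exchanged. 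This rewrites the numerator of $\text{cont}_p(f)$ as $\sum_{k=1}^{K(x)}\#\{(a,b)\in[1,x]^2:\gcd(a,b,p)=1,\ p^k\mid F(a,b)\}$.

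Next I would estimate each count. The conditions $\gcd(a,b,p)=1$ and $p^k\mid F(a,b)$ depend only on $(a,b)\bmod p^k$, so tiling $[1,x]^2$ by boxes of side $p^k$ and invoking \eqref{formule gamma F} gives
$$\#\{(a,b)\in[1,x]^2:\gcd(a,b,p)=1,\ p^k\mid F(a,b)\}=\frac{x^2}{p^{2k}}\varphi(p^k)n_{p^k}(f)+O\!\left(\frac{x}{p^k}\varphi(p^k)n_{p^k}(f)\right),$$
the error coming from the incomplete boxes along the boundary. Since $\varphi(p^k)/p^{2k}=\tfrac{p-1}{p}\,p^{-k}$, summing the main terms produces $\tfrac{p-1}{p}x^2\sum_k n_{p^k}(f)/p^k$. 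The denominator is handled directly: $\#\{(a,b)\in[1,x]^2:\gcd(a,b,p)=1\}=x^2-\#\{(a,b)\in[1,x]^2:p\mid a,\ p\mid b\}=\tfrac{p^2-1}{p^2}x^2+O(x)$.

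The heart of the argument, and the step I expect to be the most delicate, is controlling uniformly in $x$ the two sources of inexactness: the tail $\sum_{k>K(x)}$ of the series and the accumulated boundary errors $\sum_{k=1}^{K(x)}O(x\,n_{p^k}(f))$ (using $\varphi(p^k)\le p^k$). Both are governed by Lemma~\ref{Nagel}: from $n_{p^k}(f)\le 2\deg(f)p^{\min(2\val_p(\Disc f),k)}$ one sees that $n_{p^k}(f)/p^k$ decays geometrically once $k>2\val_p(\Disc f)$, so the series $\sum_k n_{p^k}(f)/p^k$ converges, its tail beyond $K(x)$ is $O(p^{-K(x)})=O(x^{-\deg f})$, and $\sum_{k=1}^{K(x)}n_{p^k}(f)=O(\log x)$, making the total boundary error $O(x\log x)$. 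Dividing numerator by denominator and letting $x\to\infty$, the $O(x\log x)$ and $O(x)$ contributions vanish against the $x^2$ normalisation, leaving $\text{cont}_p(f)=\frac{(p-1)/p}{(p^2-1)/p^2}\sum_{k\geq1}n_{p^k}(f)/p^k$. The elementary simplification $\frac{(p-1)/p}{(p^2-1)/p^2}=\frac{p}{p+1}$ then yields the claimed value of $\text{cont}_p(f)$, which inserted into the definition of $\alpha_p(f)$ completes the proof.
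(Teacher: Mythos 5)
Your overall strategy---reduce everything to evaluating $\text{cont}_p(f)$, expand $\val_p F(a,b)$ as a sum of indicators of $p^k\mid F(a,b)$, replace each count by a local density, and let $x\to\infty$---is sound, and your treatment of the denominator, of the series tail, and the final algebra are all fine. The gap is in the justification of your key estimate
\begin{equation*}
\#\left\{(a,b)\in[1,x]^2:\gcd(a,b,p)=1,\ p^k\mid F(a,b)\right\}=\frac{x^2}{p^{2k}}\varphi(p^k)n_{p^k}(f)+O\left(\frac{x}{p^k}\varphi(p^k)n_{p^k}(f)\right)
\end{equation*}
in the regime $x<p^k\leq C_Fx^{\deg f}$, through which you sum (you apply the estimate for all $k\leq K(x)$ with $p^{K(x)}\asymp x^{\deg f}$). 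Tiling $[1,x]^2$ by boxes of side $p^k$ gives, for each admissible residue class modulo $p^k$, a count $x^2/p^{2k}+O(x/p^k+1)$; the $O(1)$ cannot be dropped, so what the tiling argument actually yields is an error $O\left(\left(x/p^k+1\right)\varphi(p^k)n_{p^k}(f)\right)$. Once $p^k>x$ the dominant part of this is $O\left(\varphi(p^k)n_{p^k}(f)\right)$, which is of size $p^k$ up to constants (Lemma~\ref{Nagel} makes $n_{p^k}(f)=O_{f,p}(1)$, but $\varphi(p^k)$ is huge), and summing over $k\leq K(x)$ produces an error $O\left(x^{\deg f}\right)$ that swallows the $x^2$ main term already when $\deg f=2$, the case of interest. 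So the step, as justified, fails: ``incomplete boxes along the boundary'' is not the right picture when the single box is larger than the whole square, and this large-$k$ regime is exactly where the real difficulty of the proposition sits.

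The estimate you wrote down is nevertheless \emph{true}; it just needs a different argument. For fixed $b$ with $p\nmid b$, the condition $p^k\mid F(a,b)$ means $a\equiv rb\pmod{p^k}$ for one of the roots $r$ of $f$ modulo $p^k$ (and symmetrically, via $F(1,\cdot)$, when $p\mid b$ and $p\nmid a$); hence for each of the at most $x$ values of $b$ the admissible $a$ lie in at most $n_{p^k}(f)$ arithmetic progressions modulo $p^k$, each meeting $[1,x]$ in $x/p^k+O(1)$ points. This gives the main term plus an error $O\left(xn_{p^k}(f)\right)$ \emph{uniformly in} $k$, which is precisely what your error accounting assumes, and with it your proof closes (and is in fact more elementary than the paper's, with a better error term). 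The paper instead avoids the regime $p^k>x$ altogether: it cuts the sum at $k_0$ with $p^{k_0}\asymp x^{2/3}$, estimates the range $k\leq k_0$ by Lemma 3.2 of \cite{Da99}, and bounds the entire tail $k>k_0$ by $\ll\log x$ (a trivial bound on $\val_p F(a,b)$) times the count at level $k_0$, obtaining a total error $O\left(x^{4/3}(\log x)^{\nu_f}\right)$. Either device---the progression count above or the paper's truncation-plus-valuation trick---must be inserted for your argument to be complete.
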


\begin{proof}

We first focus on the numerator of $\text{cont}_p(f)$. Let $x$ be a sufficiently large integer. One can choose $k_0$ such that 
$x^{2/3} \leq p^{k_0}\leq px^{2/3}$.
 We  write 
 \begin{align*}
  \sum_{\substack{1\leq a,b\leq x\\(a,b)=1}}\val_p(F(a,b))&=
  \Sigma_1(p,x)+\Sigma_2(p,x)
 \end{align*}
 with
 \begin{align*}
  \Sigma_1(p,x)=\sum_{k\leq k_0}\sum_{1\leq a,b\leq x}\#\left\{
  1\leq a,b\leq x:(a,b,p)=1\text{ and }p^k|F(a,b)\right\}
 \end{align*}
 and
  \begin{align*}
  \Sigma_2(p,x)=\sum_{ k\geq k_0+1}\sum_{1\leq a,b\leq x}\#\left\{
  1\leq a,b\leq x:(a,b,p)=1\text{ and }p^k|F(a,b)\right\}.
 \end{align*}
 
In view of the formula (\ref{formule gamma F}), we can use Lemma 3.2 of \cite{Da99} to deduce that
\begin{align*}
	\Sigma_1(p,x)&=
 \sum_{k\leq k_0} \frac{\#\left\{(a,b)\in
[1,p^k]^2: \gcd(a,b,p)=1, p^k|F(a,b)\right\}}{p^{2k}}x^2\qquad\qquad\qquad\\&\pushright{+O\left( xp^{k_0/2}(k_0\log
 p)^{\nu_f}+p^{k_0}(k_0 \log p)^{2\deg f}\right)}\\
 &=\left(1-\frac{1}{p}\right)
 \sum_{k\leq k_0} \frac{n_{p^k}(f)}{p^{k}}x^2+O\left( x^{4/3}(\log x)^{\nu_f}\right),
\end{align*}
where $\nu_f=\deg f\left( 1+2\deg f\right)^{\deg f+1}$.

On the other hand, 
since $\val_p\left(F(a,b)\right)\ll\log x$, we can use Lemma \ref{Nagel} and again Lemma 3.2 of \cite{Da99} to deduce that
\begin{align*}
 \Sigma_2(p,x)&\ll\log x\sum_{1\leq a,b\leq x}\#\left\{
  1\leq a,b\leq x:(a,b,p)=1\text{ and }p^{k_0}|F(a,b)\right\}\\
  &\ll(\log x)x^2\frac{\#\left\{(a,b)\in
[1,p^{k_0}]^2: \gcd(a,b,p)=1, p^{k_0}|F(a,b)\right\}}{p^{2k_0}}+O\left( x^{4/3}(\log x)^{\nu_f}\right)\\
 &\ll x^{4/3}(\log x)^{\nu_f} .
\end{align*}

Finally, we note that
\begin{align*}
 \#\left\{1\leq a,b\leq x:(a,b,p)=1\right\}&=
\#\left\{1\leq a,b\leq x:p\nmid a\right\}+
 \#\left\{1\leq a,b\leq x:p|a\text{ and }p\nmid b\right\}\\
 &= \left(1-\frac{1}{p^2}\right)x^2+O(x).\end{align*}
The result follows when $x$ tends to infinity since then $k_0$ tends to infinity.
\end{proof}

\subsection{Convergence of $\alpha(f)$}

The formula of $\alpha_p(f)$ gets a simple form when $p$ does not divide $\Disc(f)$ nor the leading coefficient of $f$. 
Indeed, Lemma~\ref{Nagel} and Proposition~\ref{prop:alpha_p} imply that, for such primes $p$, we have
\begin{equation}\label{regular prime}
\alpha_p(f)=
\log p\left(\frac{1}{p-1}-\frac{n_p(f)}{p-1}\left(\frac{p}{p+1}\right)\right).
\end{equation}
Let $\omega$ be a root of $f$, $K$ the rupture field of $f$ and $\tilde{\omega}:=F(1,0)\omega$ an integer of $K$. It follows from a result of Dedekind \cite{De78} that, for any prime $p$ which not divide $F(1,0)$ nor the index 
$[\mathcal{O}_K:\Z[\tilde{\omega}]]$, $n_p(f)$ is the number of ideals $\mathfrak{p}$ such that $N(\mathfrak{p})=p$. This suggests to put \begin{displaymath}
p_0=\max\left\{p\text{ prime}: p|F(1,0)F(0,1)\text{ or }p|\Disc(F)\text{ or }p|[\mathcal{O}_K,\Z[\tilde{\omega}]]\right\}.
\end{displaymath}
After the previous discussion, the problem of convergence of $\alpha(f)$ is reduced to showing the convergence
of the series
\begin{align*} 
\sum_{p}\log p\left(\frac{1}{p-1}-\frac{n_p(K)}{p-1}\left(\frac{p}{p+1}\right)\right)
\end{align*}
where $n_p(K)$ denotes the number of ideals $\mathfrak{p}$ such that $N(\mathfrak{p})=p$.

We first remark that, for any $X\geq2$, we can write
\begin{align*}
 \sum_{p\leq X}\log p\left(\frac{1}{p-1}-\frac{n_p(K)}{p-1}\left(\frac{p}{p+1}\right)\right)
&=\sum_{p\leq X}
\frac{\log p}{p}\left(1-n_p(K)\right)
+\sum_{p\leq X}\frac{\log p}{p(p-1)}\left(1-\frac{n_p(K)}{p+1}\right).
\end{align*}

On the one hand, 
 from the trivial estimation $|n_p(K)|\leq n_K$ and the Chebyshev estimation
 \begin{equation}\label{Chebychev estimate}
 \sum_{p\leq X}\log p\leq eX
 \end{equation}
 with $e=1.01624 $ (see Theorem 9 of \cite{RS62}),
 we can
 use a summation by parts to get, for any 
$X_2\geq X_1\geq n_K$, 
\begin{align*}\label{serie abs. conv.}
 \nonumber\sum_{X_1<p\leq X_2}\frac{\log p}{p(p-1)}\left|1-\frac{n_p(K)}{p+1}\right|
&\leq \sum_{X_1<p\leq X_2}\frac{\log p}{p(p-1)}\\
&\leq 
\frac{3e}{X_1-1}. \end{align*} 
On the other hand, we can write, again with a summation by parts,
\begin{align*}
 \sum_{X_1<p\leq X_2}
\frac{\log p}{p}\left(1-n_p(K)\right) 
&= \frac{R(X_2)}{X_2}- \frac{R(X_1)}{X_1}+ \int_{X_1}^{X_2}
\frac{R(t)}{t^2}\text{d}t,
\end{align*}
where $R$ is the rest term defined by  \begin{align*}
      R(t):=\sum_{p\leq t}(1-n_p(K))\log p.
     \end{align*}

Therefore, it suffices to use a sufficiently sharp estimation of $R(t)$, which is the object of the next theorem. On the
one hand, we can obtain a very sharp estimation using the Riemann
hypothesis for $\zeta_K$ and $\zeta_{\mathbb{Q}}$. But on the other hand, we have a good estimation
relying on no assumptions. 
\begin{theo}[Theorem 9.2 of \cite{LO77}]\label{effectif}
	\begin{enumerate}
		\item There exists an absolute effectively computable
			constant $c_1>0$ such that, if
			$X\geq\exp\left(4d_K(\log \Disc(K))^2\right)$, then
\begin{align*}
\left|\sum_{N(\mathfrak{p}^k)\leq X}\log N(\mathfrak{p})-X+ \frac{X^{\beta(K)}}{\beta(K)}\right| 
\leq X\exp\left(-c_1d_K^{-1/2}(\log X)^{1/2}\right),
\end{align*}
where $\beta(K)$ denotes the largest real zero of $\zeta_K$ in the interval
$(0,1)$ if it exists and $1/2$ otherwise.

\item Moreover, if the Riemann Hypothesis holds for $\zeta_K$, there exist explicit constants $a_K$, $b_K$ and $c_K$ such that, for $X\geq2$, we have
\begin{align*}
\left|\sum_{N(\mathfrak{p}^k)\leq X}\log N(\mathfrak{p})-X\right| 
\leq 
X^{1/2}\left(a_K+b_K\log X+c_K(\log X)^2\right).
\end{align*}
\end{enumerate}
\end{theo}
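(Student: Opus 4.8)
The plan is to prove this by the classical explicit-formula method: express the Chebyshev-type sum as a contour integral of the logarithmic derivative of $\zeta_K$ and then control the zeros of $\zeta_K$. Write $\psi_K(X):=\sum_{N(\mathfrak{p}^k)\leq X}\log N(\mathfrak{p})$, which is the partial sum of the ideal von Mangoldt function $\Lambda_K(\mathfrak{a})$, so that $-\zeta_K'(s)/\zeta_K(s)=\sum_{\mathfrak{a}}\Lambda_K(\mathfrak{a})N(\mathfrak{a})^{-s}$ for $\Re(s)>1$. First I would apply a truncated Perron formula to obtain, for some $c>1$ and a height $T$ to be chosen,
\[
\psi_K(X)=\frac{1}{2\pi i}\int_{c-iT}^{c+iT}\left(-\frac{\zeta_K'(s)}{\zeta_K(s)}\right)\frac{X^s}{s}\,ds+E(X,T),
\]
where $E(X,T)$ is an explicit error bounded using the Dirichlet coefficients of $-\zeta_K'/\zeta_K$.

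Next I would shift the contour to the left and collect residues. The simple pole of $\zeta_K$ at $s=1$ produces the main term $X$; each nontrivial zero $\rho$ contributes $-X^\rho/\rho$; and the pole of $X^s/s$ at $s=0$ together with the trivial zeros (coming from the gamma factors in the functional equation) contribute terms of size $O(\log\Disc(K)+d_K)$, lower order in $X$. Isolating the possible real exceptional (Siegel) zero $\beta(K)\in(0,1)$ produces the displayed secondary term $-X^{\beta(K)}/\beta(K)$, and the remaining task is to bound $\sum_\rho X^\rho/\rho$ over all other zeros. For this I need two effective ingredients: a zero-counting estimate of the shape $N_K(T+1)-N_K(T)\ll\log\Disc(K)+d_K\log(T+2)$, obtained from the argument principle applied to the completed zeta function via its functional equation and Hadamard product; and an effective zero-free region.

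The hard part is the zero-free region, made explicit in $d_K$ and $\Disc(K)$. I would run the de la Vallée Poussin argument, applying the trigonometric inequality $3+4\cos\theta+\cos 2\theta\geq0$ to $\Re\bigl(-\zeta_K'/\zeta_K\bigr)$ evaluated at $s$, $s+it$, $s+2it$, to force a region $\sigma>1-c_0/\bigl(\log\Disc(K)+d_K\log(|t|+2)\bigr)$ free of zeros, with the sole possible exception of one simple real zero $\beta(K)$. Controlling this exceptional zero — the Deuring–Heilbronn repulsion phenomenon and the delicate effective lower bound separating $\beta(K)$ from $1$ — is the genuine obstacle, and it is precisely why the exceptional term must be carried explicitly rather than absorbed into the error.

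Finally I would insert these bounds into the residue sum: combining the zero-free region with the zero count shows $\sum_\rho|X^\rho/\rho|\ll X\exp\bigl(-c_1 d_K^{-1/2}(\log X)^{1/2}\bigr)$ after choosing $T$ (roughly $\log T\asymp d_K^{-1/2}(\log X)^{1/2}$) to balance the contour error $E(X,T)$ against the zero sum; this is exactly where the hypothesis $X\geq\exp\bigl(4d_K(\log\Disc(K))^2\bigr)$ enters, guaranteeing that the main term dominates and that all estimates are uniform. For the second, conditional statement I would replace the zero-free region by the Riemann Hypothesis for $\zeta_K$: every nontrivial $\rho$ satisfies $\Re(\rho)=1/2$, so $|X^\rho/\rho|=X^{1/2}/|\rho|$, and summing over zeros with the counting estimate $N_K(T)\ll T(\log\Disc(K)+d_K\log T)$ yields the bound $X^{1/2}\bigl(a_K+b_K\log X+c_K(\log X)^2\bigr)$ with constants computed explicitly from $d_K$ and $\log\Disc(K)$.
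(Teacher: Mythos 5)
This theorem is not proved in the paper at all: it is imported verbatim as Theorem~9.2 of the cited reference [LO77] (Lagarias--Odlyzko), so there is no in-paper argument to compare against. Your outline --- truncated Perron formula for $-\zeta_K'/\zeta_K$, contour shift collecting the pole at $s=1$, the exceptional real zero $\beta(K)$, and the nontrivial zeros, controlled by an effective zero-counting estimate and a de la Vall\'ee Poussin zero-free region (with RH replacing the zero-free region in the conditional case) --- is precisely the Lagarias--Odlyzko explicit-formula proof, so your approach is the correct and standard one; the only caveat is that what you present is a sketch, and the substance of the cited result lies exactly in the parts you flag as hard, namely making the zero-free region, the zero repulsion, and all constants explicit in $d_K$ and $\Disc(K)$.
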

\begin{rem}\label{rem: numerical values}
\begin{itemize}\item Some effective bounds for $\beta_K$ are contained in Theorem 1.4  of \cite{LO77}. 
\item Numerical values for $a_K$, $b_K$ and $c_K$ are given without proof in \cite{Oe79}. The values  $a_K=\frac{4781}{96}\log(d_K)+\frac{58681}{113}n_K$, $b_K=\frac{23}{3}\log(d_K)+\frac{68}{3}n_K$ and $c_K=\frac{863}{31}n_K$ can be rigorously obtained from Theorem~$8.1$ of~\cite{Wi14}.\end{itemize}\end{rem} 

 In order to use  Theorem \ref{effectif}, we have to study 
 the contribution of powers of prime ideals. Using the Chebyshev estimation (\ref{Chebychev estimate}), we get,
 for any $X\geq2$,
\begin{align*}
\sum_{\substack{N(\mathfrak{p}^k)\leq X\\ k\geq 2\text{ or }N(\mathfrak{p})\text{ not prime}}}
\log N(\mathfrak{p})
&\leq e d_K\sum_{k\geq 2}\sum_{p\leq X^{\frac{1}{k}}}\log p\\
&\leq ed_K\left(X^{\frac{1}{2}}+\frac{\log X}{\log 2}X^{1/3}\right).
\end{align*}

Consequently, we have, for  $t\geq\exp\left(4d_K(\log \Disc(K))^2\right)$,
\begin{equation*}
 R(t)\ll
              d_Kt^{\frac{1}{2}}+\frac{t^{\beta(K)}}{\beta(K)}+t\exp\left(-c_1(d_K)^{-1/2}
              \left(\log t\right)^{1/2}\right)
\end{equation*}

By a straightforward calculation of primitive, we deduce 
from these estimations that we have, for $X_2\geq X_1\geq\exp\left(4d_K(\log \Disc(K))^2\right)$, 
\begin{align*}
 \left|\sum_{X_1<p\leq X_2}\frac{\log p}{p}\left(1-n_p(K)\right)\right| 
&\leq \frac{|R(X_1)|}{X_1}+\frac{|R(X_2)|}{X_2}
+\int_{X_1}^{X_2}\frac{|R(t)|}{t^2}\text{d}t\\
&\ll
   d_KX_1^{-\frac{1}{2}}+\frac{X_1^{\beta(K)-1}}{\beta(K)}+\exp\left(-c_1(d_K)^{-1/2}
              \left(\log X_1\right)^{1/2}\right),
\end{align*}
which implies the convergence of $\alpha(f)$.

In order to get a good estimation of the convergence speed, we now assume that the Riemann Hypothesis holds for $\zeta_{\mathbb{Q}}$ and $\zeta_K$.
It follows from Theorem \ref{effectif} that we have, for $t\geq2$,
\begin{align*}|R(t)|\leq ed_Kt^{1/2}+ed_Kt^{1/3}\log t/\log2+a_Kt^{1/2}+b_Kt^{1/2}(\log t)+c_Kt^{1/2}(\log t)^2.
\end{align*} 

As a consequence of the previous discussion, we can get that, for $X\geq2$, \begin{align*}
 \left|\sum_{X<p}\frac{\log p}{p}\left(1-n_p(K)\right)\right| 
&\leq \frac{|R(X)|}{X}
+\int_{X}^{+\infty
}\frac{|R(t)|}{t^2}\text{d}t\\
&\leq X^{-1/2}\Bigg(\left(3a_K+3ed_K+4b_K+16c_K+(3b_K+8c_K)\log X+3c_K(\log X)^2\right)\\
&+X^{-1/6}\frac{ed_K}{\log 4}\left(\frac{9}{2}+5\log X\right)\Bigg).
\end{align*}

 It follows that 
 the speed of convergence is given, for $X\geq\max(p_0,n_K)$, by
\begin{align}\nonumber\left|\alpha(f)-\sum_{p\leq X}\alpha_p(f)\right|\leq &X^{-1/2}\Big(3
\frac{eX^{1/2}}{X-1}+X^{-1/6}
\frac{ed_K}{\log 4}\left(\frac{9}{2}+5\log X\right)\\&+\left(3a_K+3ed_K+4b_K+16c_K+(3b_K+8c_K)\log X+3c_K(\log X)^2\right)\Big).\label{estimation convergence effective}
\end{align}
\begin{ex}\label{example} Using the best numerical values in Remark~\ref{rem: numerical values} we can certify effective bounds on $\alpha(F)$ for given binary forms $F$. Consider for example $F(X_1,X_2)=X_1^2+qX_2^2$ with $q=10^{30}+57$.
By computing the partial sum of $\alpha(F)$ for primes less than $X=40096176099$ we obtain:
\begin{equation*}
\left|\alpha(F)-2.39\right|<1.
\end{equation*}
This emphasizes the importance of obtaining small effective constants in Theorem~\ref{effectif}. 
\end{ex}

\section{Towards the average of $\alpha$ on a set of polynomials}\label{sec:average}
The polynomial selection stage of NFS consists in enumerating polynomials
$f=\sum_{i=0}^df_i x^i$ of a given degree and with a bound on each
coefficient $f_i$ and in selecting those with the best value of $\alpha$.
Some variants restrict the enumeration to a subset and a short list of
polynomials with a good $\alpha$ can be further tested with longer tests or
by direct sieving. In any case, by computing the average of $\alpha$ we
guarantee a value of $\alpha$ for the best polynomials.

During the polynomial selection in NFS, it is common to restrict the search
to a set of polynomials $f$ given by $\deg f$ and the size of each coefficient. For each pair $(m,d)$ of integers and each $d$-tuple $\mathbf{I}=I_0\times
\cdots\times I_{d-1}$
of intervals such that, for all $i$, $ I_i\subset[-m,m]$, we put
\begin{equation}
	E^{(1)}(m,d,\textbf{I})=\left\{f=x^d+\sum_{i=0}^{d-1} f_i x^i:
	(f_0,f_1,\ldots,f_{d-1})\in \textbf{I},\Disc(f)\neq 0 \right\}.
\end{equation}
Due to technical reasons, we now  now study the 
 average of $\alpha(f)$ on $E^{(1)}(m,d,\textbf{I})$ rather than $E(d,\textbf{I})$ defined by (\ref{definition EDI}). 

\begin{theo}
	For any given prime $p$, uniformly with respect to \textbf{I}, one has
	\begin{equation}
		\lim_{\substack{m\rightarrow\infty\\ \min_j|I_j|/d(\log d+\log m)\rightarrow\infty}}\frac{1}{\#
	E^{(1)}(m,d,\textbf{I})} \sum_{f\in E^{(1)}(m,d,\textbf{I})}
	\alpha_p(f)=\alpha_p(X).
	\end{equation}
\end{theo}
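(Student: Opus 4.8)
The plan is to use Proposition~\ref{prop:alpha_p} to turn the statement into an averaging problem for the root counts $n_{p^k}(f)$ and to match the limit with $\alpha_p(X)$. First I would evaluate the target: the monic polynomial $f(X)=X$ has associated form $F(X_1,X_2)=X_1$, which has the single root $0$ to every power of $p$, and its second (``infinity'') term vanishes, so $n_{p^k}(X)=1$ for all $k$ and Proposition~\ref{prop:alpha_p} gives $\alpha_p(X)=\log p\left(\frac1{p-1}-\frac{p}{p+1}\cdot\frac1{p-1}\right)=\frac{\log p}{p^2-1}$. For an arbitrary monic $f$ the infinity term vanishes as well (if $b\equiv0\bmod p$ then $F(1,b)\equiv1\bmod p$), so $n_{p^k}(f)$ is exactly the number of roots of $f$ modulo $p^k$. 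Writing $M:=\min_j|I_j|$, the theorem is thus equivalent, by Proposition~\ref{prop:alpha_p}, to showing that the mean of $S(f):=\sum_{k\ge1}n_{p^k}(f)/p^k$ over $E^{(1)}(m,d,\mathbf I)$ tends to $1/(p-1)$.

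The core computation is the mean of $n_{p^k}(f)$ for fixed $k$, obtained by swapping the sums:
\[
\frac{1}{\#E^{(1)}}\sum_{f}n_{p^k}(f)=\frac{1}{\#E^{(1)}}\sum_{a\bmod p^k}\#\bigl\{f\in E^{(1)}:f(a)\equiv0\bmod p^k\bigr\}.
\]
Since the constant term $f_0$ enters $f(a)$ with coefficient $1$, fixing $f_1,\dots,f_{d-1}$ confines $f_0$ to a single class modulo $p^k$, met by $|I_0|/p^k+O(1)$ integers of $I_0$; summing over the $p^k$ classes $a$ and the free coefficients, and discarding the locus $\Disc(f)=0$ (a nonzero polynomial condition, hence an $O(d/M)$ proportion, so that $\#E^{(1)}=\prod_i|I_i|(1+o(1))$), I obtain both $\frac{1}{\#E^{(1)}}\sum_f n_{p^k}(f)=1+O(p^k/M)$ and, for every $k$, the one-sided bound $\frac{1}{\#E^{(1)}}\sum_f n_{p^k}(f)\le(1+o(1))(1+p^k/M)$.

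I would then split $S(f)$ at two scales. Set $J_0:=\lfloor\log_p M\rfloor$ and $K:=2V+L$, where $V\ll d(\log d+\log m)/\log p$ is the a priori bound on $v_f:=\val_p(\Disc f)$ coming from $|f_i|\le m$ (so $|\Disc f|\le(2d)!\,(dm)^{2d}$), and $L=L(m)\to\infty$ is chosen slowly. For $k\le J_0$ (so $p^k\le M$) the two-sided estimate makes $\sum_{k\le J_0}n_{p^k}(f)/p^k$ average to $\sum_{k\le J_0}p^{-k}+O(J_0/M)$, which tends to $1/(p-1)$. For $J_0<k\le K$ the one-sided bound makes the contribution average to at most $\sum_{k>J_0}p^{-k}+O(K/M)=O(K/M)$. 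For $k>K$ I use Lemma~\ref{Nagel}: since $k>K\ge2V\ge2v_f$, we have $n_{p^k}(f)\le2d\,p^{2v_f}$, whence $\sum_{k>K}n_{p^k}(f)/p^k\le\frac{2d}{p-1}p^{2v_f-K}\le\frac{2d}{p-1}p^{-L}$ uniformly in $f$, which tends to $0$.

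The decisive point — and the only place the hypothesis enters — is that all three ranges are controlled at once: the far tail forces $K\gtrsim 2V\asymp d(\log d+\log m)/\log p$, while the middle range demands $K/M\to0$, and these are compatible exactly when $M/\bigl(d(\log d+\log m)\bigr)\to\infty$ (then $V/M\to0$, and one may take $L\to\infty$ with $L=o(M)$). Combining the three ranges, the mean of $S(f)$ tends to $\sum_{k\ge1}p^{-k}=1/(p-1)$, so by Proposition~\ref{prop:alpha_p}
\[
\frac{1}{\#E^{(1)}}\sum_{f\in E^{(1)}}\alpha_p(f)\longrightarrow\log p\left(\frac1{p-1}-\frac{p}{p+1}\cdot\frac1{p-1}\right)=\alpha_p(X),
\]
uniformly in $\mathbf I$, as claimed. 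I expect the main obstacle to be precisely this uniform reconciliation of the scales $J_0\sim\log M$ and $K\sim d\log m$: Lemma~\ref{Nagel} together with the trivial size bound on $\Disc f$ is what keeps the far tail negligible despite $n_{p^k}(f)$ being large for the exceptional $f$ with highly singular reduction.
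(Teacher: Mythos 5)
Your proposal is correct and follows essentially the same route as the paper's own proof: reduce via Proposition~\ref{prop:alpha_p} to averaging $n_{p^k}(f)$ (noting that monic $f$ have no roots ``at infinity'' and that $\alpha_p(X)$ corresponds to $n_{p^k}\equiv 1$), prove equidistribution for small $k$ by fixing $f_1,\dots,f_{d-1}$ and counting the admissible $f_0$ in each residue class modulo $p^k$ (discarding the $\Disc(f)=0$ locus), and kill the tail with Lemma~\ref{Nagel} combined with the a priori size bound on $|\Disc(f)|$. The only difference is a technical refinement: you cut at the extra intermediate scale $J_0=\lfloor\log_p \min_j|I_j|\rfloor$ and use a one-sided bound on the middle range, whereas the paper runs its two-sided count all the way up to $k_0(p)\asymp \log_p\left((2d-1)!\,m^{2d}d\right)$ before invoking Nagell's bound; your variant is, if anything, slightly tighter in its dependence on $d$.
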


\begin{proof}
	In view of Proposition~\ref{prop:alpha_p} and  Theorem~\ref{alpha degre 1}, we can suppose that $d\geq2$ and write, for any prime $p$,	
	\begin{align*}\alpha_p(f)-\alpha_p(x)=\frac{p\log
	p}{p+1}\sum_{k\geq1}\frac{1-n_{p^k}(f)}{p^k}.\end{align*} 
	
	 For any pair $k$, we put 
	\begin{equation*}
		S_p(k,m,d,\textbf{I})=
		\sum_{\substack{f\in E(d,m,\textbf{I})}}(1-n_{p^k}(f)).
	\end{equation*}	
	Then we have
	\begin{align*}
	 \sum_{f\in
	E^{(1)}(m,d,\textbf{I})}(\alpha_p(f)-\alpha_p(x))=\Sigma_p^{(1)}(m,d,\textbf{I})+\Sigma_p^{(2)}(m,d,\textbf{I})
	,
	\end{align*}
	where 
	\begin{eqnarray*}
		\Sigma_p^{(1)}(m,d,\textbf{I})&=&\frac{p\log p}{p+1}\sum_{k\leq
	k_0}\frac{S_p(k,m,d,\textbf{I})}{p^k},\\
		\Sigma_p^{(2)}(m,d,\textbf{I})&=&\frac{p\log p}{p+1}\sum_{k\geq k_0}  \frac{S_p(k,m,d,\textbf{I})}{p^k}.
	\end{eqnarray*}

Using the definition of the discriminant, for any $f$ in $E^{(1)}(m,d,\textbf{I})$, we have the upper bound
\begin{align*}
|\Disc(f)|\leq (2d-1)! m^{2d-1}.
\end{align*}

Consider $k_0(p)= \left\lceil \log_p \left((2d-1)!m^{2d-1}\right)\right\rceil+\lceil \log_p(md)\rceil$.

	\textbf{Case 
	 $k\leq k_0(p)$.} 
	Since the elements of $E^{(1)}(m,d,\textbf{I})$ are monic, we have 
	\begin{align*}	
	\#\left\{f\in E^{(1)}(m,d,\textbf{I}), p^{jd}f\left(p^{-j}\right)\equiv0\pmod{p^k}
	\right\}=0
	.\end{align*}  
Consequently, we can write
	\begin{align*}
\Sigma_p^{(1)}(m,d,\textbf{I})=\frac{p\log p}{p+1}\sum_{
k\leq
	k_0}\frac{1}{p^k}\left(\#E^{(1)}(m,d,\textbf{I})-
	\sum_{r=0}^{p^k-1}
	\#\left\{f\in E^{(1)}(m,d,\textbf{I}), f(r)\equiv0\pmod{p^k}
	\right\}\right)
	.\end{align*}
	
	We consider first the cardinality of  $E^{(1)}(m,d,\textbf{I})$. Given
	$(f_1,\ldots,f_{d-1})\in I_1\times\cdots I_{d-1}$, the polynomial $dx^{d-1}+\sum_{i=1}^{d-1} if_ix^{i-1}$ 
	has at most $d-1$ complex
	roots. For each such root $z$, there is exactly one value of $f_0\in I_0$
	such that $\sum_{i=0}^d f_i z^i=0$. Hence there are at most $d|\mathbf{I}|/|I_0|$ polynomials $f$ of zero discriminant and coefficients in $\textbf{I}$. It follows that
	\begin{align*}
	E^{(1)}(m,d,\textbf{I})&=\#\left\{(f_0,\dots,f_{d-1})\in\textbf{I}\right\}
	-\#\left\{(f_0,\dots,f_{d-1})\in\textbf{I}:
	\Disc(x^d+\sum_{i=0}^{d-1}f_ix^i)=0\right\}
	\\&=\left|\textbf{I}\right|\left(1+O\left(\frac{d}{\min_j\left|I_j\right|}\right)\right)
	.\end{align*}
	
	Let $k\leq k_0(p)$ be an integer and $r\in[0,p^k-1]$. For each $(d-1)$-tuple
	$(f_1,\ldots,f_{d-1})\in I_1\times\cdots\times I_{d-1}$, the number of
	values $f_0$ such that $f(r)\equiv0\pmod{p^k}$ is $\left\lfloor
	\frac{|I_0|}{p^k}\right\rfloor+\epsilon$ with $\epsilon=0$ or $1$. Hence,
	it follows that
	\begin{align*}&\#\left\{f\in E^{(1)}(m,d,\textbf{I}), f(r)\equiv0\pmod{p^k}
	\right\}\\=&
	\#\left\{(f_0,\ldots,f_{d-1})\in\textbf{I}: f(r)\equiv0\pmod{p^k}\right\}\\
	&+O\left(\#\left\{(f_0,\ldots,f_{d-1})\in\textbf{I}: \Disc(x^d+\sum_{i=0}^{d-1}f_ix^i)=0\right\}\right)\\
		\\=&
		\left(\frac{|I_0|}{p^k}+O(1)\right)\frac{|\textbf{I}|}{|I_0|}\left(1+O\left(\frac{d}{\min_j|I_j|}\right)\right)+O\left(\frac{d|\textbf{I}|}{\min_j\left|I_j\right|}\right)\\
		=&\frac{|\textbf{I}|}{p^k}+O\left(\frac{|\textbf{I}|d}{\min_j|I_j|}\right).\end{align*}  

	It results that
	\begin{align*}
	\Sigma_p^{(1)}(m,d,\textbf{I})&\ll 
	\log p\sum_{k\leq k_0(p)}\frac{|\textbf{I}|d}{\min_j|I_j|}
	\\
	&\ll k_0(p)\log p \frac{|\textbf{I}|d}{\min_j|I_j|}.
	\end{align*}

	\textbf{Case 
	$k\geq k_0(p)$.} 
	Due to the choice of $k_0(p)$, we have $k_0\geq 2\val_p \Disc(f)$ for all polynomials $f$ in $E(m,d,\textbf{I})$. By Lemma~\ref{Nagel}, for all $k\geq k_0(p)$, we have 
	\begin{align*}n_{p^k}(f)\ll d\Disc(f)^2,
\end{align*}
which is further upper bounded by $ (2d-1)!dm^{2d-1}\leq p^{k_0(p)}/m$. We deduce that
	\begin{align*}
	\Sigma_p^{(2)}(m,d,\textbf{I})
	&\ll
		d((2d-1)!m^{2d-1})^2\log p\sum_{\substack{
		k>k_0(p)}}p^{-k}\left|
		E^{(1)}(m,d,\textbf{I})
		\right|\\
	&\ll d((2d-1)!m^{2d-1})^2|\textbf{I}|
	\frac{\log p}{p^{k_0(p)}}\\
	&\ll \frac{|\textbf{I}|}{m}. 
	\end{align*}
	
	When combining the bounds on $\sum^{(1)}_p(m,d,\textbf{I})$ and $\sum^{(2)}_p(m,d,\textbf{I})$, we obtain that, uniformly for $p\geq1$, we have
	\begin{align}\label{estimation petits p}
	\sum_{f\in
	E^{(1)}(m,d,\textbf{I})}(\alpha_p(f)-\alpha_p(x))\ll|\textbf{I}|\left(\frac{1}{m}+\frac{d(\log d+\log m)}{\min_j|I_j|}\right).
	\end{align}
\end{proof}

In view of the previous theorem, it seems to be interesting to compute the value  of $\alpha(X)$. This is the aim of the following proposition.

\begin{prop}\label{alpha degre 1}
	Let $g=aX+b\in\Z[X]$ be a polynomial with $\gcd(a,b)=1$. Then we have
\begin{align*}
 \alpha(g)=12\log A-\gamma-\log(2\pi)\approx 0.56 
.\end{align*}
where $A$ denotes the Glaisher-Kinkelin constant and $\gamma$ denotes the Euler-Mascheroni constant.
\end{prop}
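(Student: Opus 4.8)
The plan is to reduce the statement to a closed-form evaluation of $\sum_p \frac{\log p}{p^2-1}$. The first step is to compute the local quantities $n_{p^k}(g)$ attached to the primitive linear form $G(X_1,X_2)=aX_1+bX_2$. Writing out the two counts in the definition of $n_{p^k}$ following \eqref{formule gamma F} --- the affine roots of $g(X)=aX+b$ modulo $p^k$, and the contribution of the point at infinity $G(1,t)=a+bt$ with $t\equiv 0\pmod p$ --- I would distinguish whether $p\mid a$ or not. If $p\nmid a$, the congruence $aX+b\equiv0\pmod{p^k}$ has a unique solution while the infinity term vanishes since $a+bt\equiv a\not\equiv0\pmod p$; if $p\mid a$, then $p\nmid b$ because $\gcd(a,b)=1$, the affine term vanishes, and the infinity term contributes exactly the unique $t\equiv -a b^{-1}\pmod{p^k}$, which automatically satisfies $t\equiv0\pmod p$. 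In both cases $n_{p^k}(g)=1$ for every prime $p$ and every $k\geq1$.

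With this in hand, Proposition~\ref{prop:alpha_p} gives, after summing the geometric series $\sum_{k\geq1}p^{-k}=(p-1)^{-1}$,
\[
\alpha_p(g)=\log p\left(\frac{1}{p-1}-\frac{p}{p+1}\cdot\frac{1}{p-1}\right)=\frac{\log p}{p^2-1}.
\]
Summing over $p$ and expanding $\frac{1}{p^2-1}=\sum_{j\geq1}p^{-2j}$ (the resulting double series has nonnegative terms, so the interchange of summation is legitimate), I obtain
\[
\alpha(g)=\sum_p\frac{\log p}{p^2-1}=\sum_{j\geq1}\sum_p\frac{\log p}{p^{2j}}=-\frac{\zeta'(2)}{\zeta(2)},
\]
the last equality being the value at $s=2$ of the Euler-product identity $-\zeta'(s)/\zeta(s)=\sum_p (\log p)/(p^s-1)$.

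It then remains to evaluate $-\zeta'(2)/\zeta(2)$ in closed form. I would use the symmetric functional equation $\pi^{-s/2}\Gamma(s/2)\zeta(s)=\pi^{-(1-s)/2}\Gamma((1-s)/2)\zeta(1-s)$, whose logarithmic derivative at $s=2$ relates $\zeta'(2)/\zeta(2)$ to $\zeta'(-1)/\zeta(-1)$ through $\psi(1)=-\gamma$ and $\psi(-1/2)=2-\gamma-2\log2$ (note that $\Gamma(s/2)$ is regular and non-vanishing at both $s=2$ and $s=-1$, so no trivial-zero cancellation is needed). Inserting $\zeta(-1)=-1/12$ together with the Glaisher--Kinkelin relation $\zeta'(-1)=\tfrac1{12}-\log A$ yields $\zeta'(-1)/\zeta(-1)=12\log A-1$, and after simplification $\zeta'(2)/\zeta(2)=\gamma+\log(2\pi)-12\log A$, that is $\alpha(g)=12\log A-\gamma-\log(2\pi)$.

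The steps concerning $n_{p^k}(g)$ and the Euler factor are elementary; the main obstacle is the closed-form evaluation of $-\zeta'(2)/\zeta(2)$, which hinges on differentiating the functional equation correctly and on recalling the special value $\zeta'(-1)=\tfrac1{12}-\log A$. Choosing the symmetric (completed) form of the functional equation rather than the asymmetric one is what keeps this step clean, since it avoids the $0\cdot\infty$ indeterminacy arising from the trivial zero of $\zeta$ and the pole of $\Gamma$ at the relevant arguments.
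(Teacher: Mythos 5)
Your proposal is correct and follows essentially the same path as the paper's proof: establish $n_{p^k}(g)=1$ for all $p$ and $k\geq1$, apply Proposition~\ref{prop:alpha_p} to obtain $\alpha(g)=\sum_p \frac{\log p}{p^2-1}=-\frac{\zeta'(2)}{\zeta(2)}$, and then evaluate this quantity in closed form. The only difference is at the final step, where the paper simply quotes the known values $\zeta(2)=\frac{\pi^2}{6}$ and $\zeta'(2)=\frac{\pi^2}{6}\left(\gamma+\log(2\pi)-12\log A\right)$, whereas you rederive the latter from the symmetric functional equation together with $\zeta(-1)=-\frac{1}{12}$ and $\zeta'(-1)=\frac{1}{12}-\log A$; your computation there (including $\psi(-1/2)=2-\gamma-2\log 2$) checks out.
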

\begin{proof}
	Since $f$ has degree $1$ and $\gcd(a,b)=1$, we have, for every prime $p$ and $k\geq1$, 
	\begin{align*} 
	 n_{p^k}(f)=1.
	 \end{align*} Consequently, it follows from Proposition \ref{prop:alpha_p} that
	 \begin{align*}
	\alpha(f)=\sum_{p}\frac{\log p}{p-1}\left(1-\frac{p}{p+1}\right)=
 \sum_{p}\frac{\log p}{p^2-1}.
\end{align*}
From the formula
\begin{align*}
 \frac{\zeta_{\mathbb{Q}}'(s)}{\zeta_{\mathbb{Q}}(s)}=-\sum_{p}\frac{\log p}{p^s-1},
\end{align*}
which holds for any complex $s$ such that $\Re(s)>1$, we deduce that
\begin{align*}
 \alpha(f)=\sum_{p}\frac{\log p}{p^2-1}=-\frac{\zeta_{\mathbb{Q}}'(2)}{\zeta_{\mathbb{Q}}(2)}.
\end{align*}
The result is then a direct consequence of the formulas
\begin{align*}
\zeta_{\mathbb{Q}}(2)=\frac{\pi^2}{6}\qquad\text{ and }\qquad\zeta_{\mathbb{Q}}'(2)=\frac{\pi^2}{6}\left(\gamma+\log(2\pi)-12\log A\right).
\end{align*}
\end{proof}

We can remark that this proposition asserts that $\alpha(g)=\alpha(X)$ for any linear polynomial $g$. This observation is a new argument towards the direction that the polynomial selection is essentially not influenced by the linear polynomial. 

\section{A theoretical modification of NFS}\label{sec:modification}
\subsection{The algorithm}
The main goal of this section is to prove smoothness results for binary forms of degree $2$. This case can be treated with multiplicative methods since the values of a quadratic binary form are norms of arbitrary integer elements of a quadratic field. The same theorems apply to binary forms of higher degrees if we modify the algorithm as below. By doing so, we transfer the difficulty from the field of analytic number theory to that of algorithmic number theory. 

In short, in our modification of NFS, instead of considering elements $a-b\theta$ of $\Q(\theta)$, we consider arbitrary elements $a_0+a_1\omega+\cdots a_{d-1} \omega^{d-1}$ of norm bounded by a constant, where $d$ is the degree of the defining polynomial $f$. In more detail, the new version of the algorithm is as follows. We select
two polynomials $f$ and $g$, with $g$ linear such that
there exists an integer $m$ such that $f(m)\equiv g(m)\equiv 0\mod N$. We
use the same factor base as in the classical version of NFS, i.e. if $B$ is the
smoothness bound, the factor base includes degree-$1$ ideals in the number
field of $f$ and primes up to $B$. Let $\omega$ be a root of $f$ in its
number field. We set $X_f$ and $X_g$ to the maximal value of
$\Norm(a_0+a_1\omega)$  and $|a_0+a_1m|$ respectively when $a_0$ and $b_0$ are bounded by the constant used in NFS. Next we collect primitive polynomials
$P(x)=a_0+a_1x+\cdots+a_{d-1}x^{d-1}$ such that 
\begin{itemize}
\item $(a_0,\dots,a_{d-1})=1$
\item $|\Norm(P(\omega))|\leq X_f\text{ and }|P(m)|\leq X_g.$
\item $\Norm(P(\omega))$ and
$|P(m)|$ are $B$-smooth.
\end{itemize}
 Each polynomial $P$ allows us to obtain a relation as explained by Joux, Lercier, Smart and Vercauteren in~\cite{JLSV06}. Finally, we use the linear system to obtain a non-trivial solution of equation $X^2\equiv Y^2\pmod N$ by following step by step the classical variant of NFS.  

The practicality of this modification will be investigated by the first author in a future work. The main difficulty is to enumerate the ideals whose norm is bounded by a given constant. 

\subsection{The smoothness probability : general case}

Let $\omega$ be an algebraic integer, non rational, and $K=\Q(\omega)$. In view of the previous discussion, we now focus on the study of the cardinality of
\begin{align*}
&\Bigg\{(a_0,\dots,a_{d-1})\in\Z^d: \gcd(a_0,\dots,a_{d-1})=1, N(a_0+\dots+a_{d-1}\omega^{d-1})\leq x\qquad\qquad\qquad\qquad\\&\pushright{\text{ and }
P\left(N(a_0+\dots+a_{d-1}\omega^{d-1})\right)\leq B\Bigg\}}.
\end{align*}

If the unit group $U_K$ is infinite (this is the case when $d_K\geq3$ or $K$ is a real quadratic field), such a set is infinite. 
However, we can remark that the ideals $\mathfrak{I}$ generated by its elements are primitive, namely that, for any prime $p$, $p\mathcal{O}_K\nmid \mathfrak{I}$.
Consequently, it makes sense to concentrate ourself to  the cardinality 
\begin{align*}
\Psi_K^{(1)}(x,B):=\#\left\{\mathfrak{I}\text{ primitive}: N(\mathfrak{I})\leq x\text{ and }
P(N(\mathfrak{I}))\leq B
\right\}.
\end{align*} 

A standard way -- the one followed here --   to get an asymptotic formula for $\Psi_K^{(1)}(x,B)$ consists to apply to the Dirichlet series $\mathcal{F}_K(s)$ defined by
\begin{align*}
\mathcal{F}_K(s):=\sum_{\substack{\mathfrak{I}\text{ primitive}}}\frac{1}{N(\mathfrak{I})^s}
\end{align*}
some results of complex analysis, such as Perron's formula.
It is consistent to take a look at the shape of $\mathcal{F}_K(s).$  Using the inclusion–exclusion principle, we first remark that we have, for $\Re(s)>1$,
\begin{align}
 \mathcal{F}_K
 &=\sum_{m\geq 1}\mu(m)\sum_{\substack{m\mathcal{O}_K|\mathfrak{I}}}\frac{1}{N(\mathfrak{I})^s}
 =\zeta_K(s)\zeta_{\mathbb{Q}}(d_Ks)^{-1}.\label{inclusion-exclusion F}
\end{align}
Moreover, using the properties of the Riemann zeta function, it is immediate that $\zeta_{\mathbb{Q}}(d_Ks)^{-1}$ is absolutely convergent for $\Re(s)>\frac{1}{d_K}$. 

In view of the previous discussion, we are now in capacity to use asymptotic results of Hanrot, Tenenbaum and Wu \cite{HTW08}. We obtain the following theorem. 
\begin{theo}\label{corollaire smooth}
 Let $K$ be a number field of degree $d_K\geq2$
 . 
Then, there exists $C>0$ such that, for any $J\geq0$ and $\varepsilon>0$, we have, 
 uniformly for $\exp\left((\log\log x)^{5/3+\varepsilon}\right)\leq B\leq x$ and 
 \begin{align*}
  0<u<J+1\Rightarrow \left\{
  u\right\}
  >C(J+1)\frac{\log_2 B}{\log B} 
\end{align*}
  \begin{equation}\label{cardinal K}
 \Psi_K^{(1)}(x,B) =
  x\left(\sum_{j=0}^J\gamma_j(K)\frac{\rho^{(j)}(u)}{(\log B)^j}+
O\left(\rho(u)\left(\frac{\log(u+1)}{\log B}\right)^{J+1}
\right)\right),
\end{equation}
where 
\begin{align*}
 \gamma_j(K)\sum_{j_1+j_2=j}\frac{1}{j_1!j_2!}
  \left.
\frac{\partial^{j_1}(1-s^{-1})\zeta_K(s)}{\partial s^{j_1}}\right|_{s=1}
\left.\frac{\partial^{j_2}\zeta_{\mathbb{Q}}(d_Ks)^{-1}}{\partial s^{j_2}}\right|_{s=1}.
\end{align*}

In particular, we have \begin{align*}
\gamma_0(K)=\frac{\lambda_K}{\zeta_{\mathbb{Q}}(d_K)}
 \end{align*}
  and 
 \begin{align}\nonumber
\gamma_1(K)=&\gamma_0(K)\left(
\gamma-1+
\sum_p\log p\left(
 \frac{1}{p-1}-\cont_p(K)\right)\right),
\end{align}
with
\begin{align}
\cont_p(K)&=\left(\sum_{k\geq1}\frac{k\#|\left\{\mathfrak{I}\text{ primitive},N(\mathfrak{I})=p^k\right\}}{p^k}
 \right) \left(\sum_{k\geq0}\frac{\#\left\{\mathfrak{I}\text{ primitive},N(\mathfrak{I})=p^k\right\}}{p^k}
 \right)^{-1}.\nonumber
 \end{align}
\end{theo}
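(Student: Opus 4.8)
The plan is to reduce the smoothness-counting problem to an application of the Hanrot–Tenenbaum–Wu machinery, using the factorization of the Dirichlet series $\mathcal{F}_K(s)=\zeta_K(s)\zeta_{\mathbb{Q}}(d_Ks)^{-1}$ established in Equation~\eqref{inclusion-exclusion F}. The results of \cite{HTW08} give an asymptotic expansion for the count of smooth values of a multiplicative arithmetic function whose Dirichlet series behaves like $\zeta_{\mathbb{Q}}(s)$ near $s=1$ (a single simple pole) times a factor that is holomorphic and nonvanishing in a half-plane extending a bit to the left of $\Re(s)=1$. First I would verify that $\mathcal{F}_K(s)$ fits exactly this template: $\zeta_K(s)$ contributes a simple pole at $s=1$ with residue $\lambda_K$, while the correction factor $\zeta_{\mathbb{Q}}(d_Ks)^{-1}$ is, as already noted in the excerpt, absolutely convergent (hence holomorphic and bounded) for $\Re(s)>1/d_K$, and is nonzero there since $\zeta_{\mathbb{Q}}(d_Ks)\neq 0$ in that range. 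Because $d_K\geq 2$, the abscissa $1/d_K\leq 1/2$ lies strictly to the left of $1$, which supplies the required analytic continuation and zero-free strip.

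Next I would identify the coefficients $\gamma_j(K)$. The HTW expansion produces, as the coefficient of $\rho^{(j)}(u)/(\log B)^j$, the $j$-th Taylor coefficient at $s=1$ of the regularized local factor — concretely, the product $(1-s^{-1})\zeta_K(s)\cdot\zeta_{\mathbb{Q}}(d_Ks)^{-1}$, where the factor $(1-s^{-1})$ cancels the pole of $\zeta_K$ and turns $\zeta_{\mathbb{Q}}(s)$'s pole into the normalization matching the Dickman setup. Expanding this product by the Leibniz rule gives precisely the claimed convolution
\begin{align*}
\gamma_j(K)=\sum_{j_1+j_2=j}\frac{1}{j_1!\,j_2!}
\left.\frac{\partial^{j_1}\bigl((1-s^{-1})\zeta_K(s)\bigr)}{\partial s^{j_1}}\right|_{s=1}
\left.\frac{\partial^{j_2}\zeta_{\mathbb{Q}}(d_Ks)^{-1}}{\partial s^{j_2}}\right|_{s=1}.
\end{align*}
Evaluating the $j=0$ term is immediate: $(1-s^{-1})\zeta_K(s)\to\lambda_K$ as $s\to1$, and $\zeta_{\mathbb{Q}}(d_Ks)^{-1}\to\zeta_{\mathbb{Q}}(d_K)^{-1}$, yielding $\gamma_0(K)=\lambda_K/\zeta_{\mathbb{Q}}(d_K)$.

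For the $j=1$ coefficient I would compute the two logarithmic derivatives at $s=1$ and combine them. The derivative of $(1-s^{-1})\zeta_K(s)$ at $s=1$ contributes the $\gamma-1$ term together with a sum over primes arising from $\zeta_K'/\zeta_K$, while the derivative of $\zeta_{\mathbb{Q}}(d_Ks)^{-1}$ contributes the complementary local factor; writing everything as a sum of local Euler-factor logarithmic derivatives and recognizing each $p$-local piece as $\frac{1}{p-1}-\cont_p(K)$ gives the stated formula for $\gamma_1(K)$. The cleanest route is to note that $\log\mathcal{F}_K(s)=\sum_p \log\bigl(\sum_{k\geq0}\#\{\mathfrak{I}\text{ primitive},N(\mathfrak{I})=p^k\}p^{-ks}\bigr)$, differentiate termwise at $s=1$, and match against the definition of $\cont_p(K)$ as a ratio of the weighted and unweighted local sums.

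**The main obstacle** I anticipate is not the algebra of the coefficients but the verification that $\mathcal{F}_K(s)$ genuinely satisfies the precise hypotheses of \cite{HTW08} — in particular the required growth bound of $\zeta_K(s)$ on vertical lines in the strip $1/d_K<\Re(s)\leq 1$ and the uniformity needed to push the error term through Perron's formula over the full Hildebrand-type range $\exp((\log\log x)^{5/3+\varepsilon})\leq B\leq x$. One must confirm that the standard convexity/zero-density estimates for Dedekind zeta functions furnish enough vertical decay for the saddle-point analysis underlying the Dickman-$\rho$ expansion, and that the auxiliary factor $\zeta_{\mathbb{Q}}(d_Ks)^{-1}$, being bounded and holomorphic on the relevant contour, does not degrade the error. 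Once the analytic input is checked against the HTW framework, the rest follows by a direct, if tedious, transcription of their expansion with the coefficients computed as above.
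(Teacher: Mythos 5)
Your proposal is correct and takes essentially the same route as the paper: factor $\mathcal{F}_K(s)=\zeta_K(s)\zeta_{\mathbb{Q}}(d_Ks)^{-1}$, feed this into the Hanrot--Tenenbaum--Wu framework (their Theorems 1.1 and 1.2), and obtain $\gamma_0(K)$ and $\gamma_1(K)$ by Leibniz expansion and logarithmic derivatives of $(1-s^{-1})\zeta_K(s)\cdot\zeta_{\mathbb{Q}}(d_Ks)^{-1}$ at $s=1$. The only divergence is in the verification detail you flag as the ``main obstacle'': the paper does not need vertical-line growth or zero-density input for $\zeta_K$, but instead checks Condition~(1.7) of \cite{HTW08} and controls the friable truncation $\sum_{P(n)\leq B}\mu(n)n^{-d_Ks}$ of the factor $\zeta_{\mathbb{Q}}(d_Ks)^{-1}$ via Theorem~II.1.13 of \cite{Te95a}, which is exactly the uniformity statement your outline leaves implicit.
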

\begin{proof}
 In view of Equation~\eqref{inclusion-exclusion F},  it is immediate that $\mathcal{F}_K(s)$ satisfies the 
 Condition~$(1.7)$ of \cite{HTW08}. Moreover, as it is noted in Section~$2.3$ of \cite{HTW08},
 Theorem~II$.1.13$ of \cite{Te95a} implies that, for any $\frac{1}{d(K)}<\delta<1$ and 
 uniformly for $\Re(s)\geq\delta$, we have
 \begin{align}\label{somme tronquee y}
  \sum_{P(n)\leq B}\frac{\mu(n)}{n^{d_Ks}}=\sum_{n}\frac{\mu(n)}{n^{d_Ks}}+O\left(\frac{1}{B^{1-\delta}}\right).
 \end{align}
Consequently, we can apply successively 
Theorem 1.2 and Theorem 1.1 of \cite{HTW08} to deduce (\ref{cardinal K}).

The statement on the values $\gamma_0(K)$ and $\gamma_1(K)$ follows from the fact that
\begin{align*}
 \left.\left(\frac{
 \frac{\partial \mathcal{F}_K(s)}{\partial s}}{\mathcal{F}_K(s)} - \frac{
\frac{\zeta_{\mathbb{Q}}(s)}{\partial s}}{\zeta_{\mathbb{Q}}(s)}\right)\right|_{s=1}=&
\sum_p\log p\left(
 \frac{1}{p-1}-\text{cont}_p(K) \right).
\end{align*}
\end{proof}

\subsection{The smoothness probability : imaginary quadratic case}\label{The smoothness probability : imaginary quadratic case}

Let $f$ be an irreducible quadratic polynomial. Its discriminant $\Disc(f)$ is  a fundamental discriminant if it satisfies one of the following conditions :
\begin{itemize}\item $\Disc(f) \equiv 1 \pmod{4}$ and is square-free,\item    $\Disc(f) = 4m$ where $m \equiv 2\text{ or }3 \pmod{4}$ and $m$ is square-free.\end{itemize}
We now apply the previous result to get an asymptotic estimation related to the proportion of smooth values of quadratic binary forms with fundamental negative discriminant.

\begin{theo}\label{theoreme principal quadratique}
 Let $F(X_1,X_2)\in\Z[X_1,X_2]$ be a primitive and irreducible quadratic form such that $\Disc(F)$ is negative and fundamental. Let $\mathcal{K}_F$ the compact defined by
 \begin{displaymath}
 \mathcal{K}_F:=\left\{(x_1,x_2)\in\R^2:|F(x_1,x_2)|\leq 1\right\}
 \end{displaymath} Then, there exists $\kappa>0$ such that, for any  $\varepsilon>0$, we have, 
 uniformly for $\exp\left((\log\log x)^{5/3+\varepsilon}\right)\leq B\leq x(\log x)^{-\kappa}$,
  \begin{align}
  \frac{\Psi^{(1)}_F(\mathcal{K}_F,x,B)
  }{\Psi^{(1)}_F(\mathcal{K}_F,x,x)
  }=
  \frac{\Psi(xe^{\alpha(f)},B)
  }{
  xe^{\alpha(f)}
  }
  \left(1+O\left(\frac{(\log(u+1))^2}{(\log B)^2}\right)\right).
  \label{cardinal K degre 2}
\end{align}

\end{theo}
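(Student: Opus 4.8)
The plan is to relate the quadratic binary form $F$ to the imaginary quadratic field $K=\Q(\omega)$ where $\omega$ is a root of $f$, and then apply Theorem~\ref{corollaire smooth} to transfer the smoothness count for $F$ into a smoothness count for primitive ideals in $\mathcal{O}_K$. First I would establish the dictionary between the two counting problems. Since $\Disc(F)$ is fundamental, $\mathcal{O}_K=\Z[\omega]$, so the index term vanishes and the values $|F(a,b)|$ for coprime $(a,b)$ are exactly the norms of primitive ideals of $\mathcal{O}_K$ (up to the finitely many ambiguities coming from the class number and the unit group, which is finite in the imaginary quadratic case). The compact $\mathcal{K}_F$ is chosen precisely so that $(a,b)\in x\mathcal{K}_F$ corresponds to $|F(a,b)|\leq x^2$, i.e. norm bounded by $x^2$; matching the scaling, one identifies $\Psi^{(1)}_F(\mathcal{K}_F,x,B)$ with a constant multiple of $\Psi_K^{(1)}(x^2,B)$, the primitive-ideal smoothness count.

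Next I would invoke Theorem~\ref{corollaire smooth} with $J=1$ to obtain the two-term asymptotic expansion for $\Psi_K^{(1)}$, giving
\begin{align*}
\Psi_K^{(1)}(y,B)=y\left(\gamma_0(K)\rho(u)+\gamma_1(K)\frac{\rho'(u)}{\log B}+O\left(\rho(u)\left(\frac{\log(u+1)}{\log B}\right)^2\right)\right).
\end{align*}
The whole point is that forming the ratio on the left-hand side of~\eqref{cardinal K degre 2} cancels the field-dependent constant $\gamma_0(K)$, and the surviving first-order correction is governed by $\gamma_1(K)/\gamma_0(K)$, which by Theorem~\ref{corollaire smooth} equals $\gamma-1+\sum_p\log p\bigl(\tfrac{1}{p-1}-\cont_p(K)\bigr)$. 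I would then show that $\cont_p(K)$ for the field $K$ coincides with $\cont_p(f)$ for the form, so that this sum is exactly $\alpha(f)$; this uses Dedekind's factorization (the same one cited around Equation~\eqref{regular prime}) identifying $n_p(f)$ with the number of degree-one primes above $p$, extended to prime powers via the primitive-ideal count in $\cont_p(K)$.

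The remaining task is to recognise the right-hand side of~\eqref{cardinal K degre 2}. Comparing the ratio $\Psi_K^{(1)}(x^2,B)/\Psi_K^{(1)}(x^2,x^2)$ with the Saias-type expansion in Theorem~\ref{developpement asymptotique Saias}, the factor $\gamma_0(K)$ cancels in numerator and denominator, and the correction $\gamma-1+\alpha(f)$ can be absorbed by shifting the main-term argument: a first-order Taylor expansion of $\rho$ shows that $\gamma_0(K)\rho(u)+\gamma_1(K)\rho'(u)/\log B$ matches, to the stated error, the quantity $\Psi(ye^{\alpha(f)},B)/(ye^{\alpha(f)})$ where $\Psi$ carries the universal Saias constant $\gamma_1=\gamma-1$. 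Concretely, the shift $y\mapsto ye^{\alpha(f)}$ changes $u=\tfrac{\log y}{\log B}$ into $u+\tfrac{\alpha(f)}{\log B}$, and expanding $\rho(u+\alpha(f)/\log B)=\rho(u)+\alpha(f)\rho'(u)/\log B+O((\log B)^{-2})$ produces exactly the coefficient $\alpha(f)$ in the first-order term, reconciling the $\gamma_1(K)$ from the field side with the $\gamma-1$ from the generic Saias formula.

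I expect the main obstacle to be the careful bookkeeping in the first step: passing from lattice points $(a,b)\in x\mathcal{K}_F$ with $\gcd(a,b)=1$ to primitive ideals of bounded norm is not a clean bijection, because each ideal class contributes, units must be quotiented out, and the correspondence between coprime representations and ideals involves the class number and a multiplicity factor. Controlling these contributions uniformly — and in particular verifying that the $(\log x)^{-\kappa}$ restriction on $B$ is what is needed to keep the boundary and non-principal-class error terms negligible relative to the main term — is the delicate analytic heart of the argument, whereas the algebraic identification of $\cont_p(K)$ with $\alpha_p(f)$ and the final Taylor-expansion reconciliation are comparatively routine once the dictionary is in place.
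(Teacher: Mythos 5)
Your outline coincides with the paper's own strategy in every step but the first: realize $F(a,b)$ as norms of elements of $\OO_K$, apply Theorem~\ref{corollaire smooth} with $J=1$, identify $\cont_p(K)=\cont_p(f)$ via Dedekind so that $\gamma_1(K)/\gamma_0(K)=\gamma-1+\alpha(f)$, and reconcile with Theorem~\ref{developpement asymptotique Saias} through the expansion $\rho\bigl(u+\alpha(f)/\log B\bigr)=\rho(u)+\alpha(f)\rho'(u)/\log B+O\bigl((\log B)^{-2}\bigr)$, which cancels the field constant in the ratio. That final Taylor-shift argument is exactly right, and the identification $\cont_p(K)=\cont_p(f)$, while not quite ``routine'' (the paper needs a case analysis at ramified primes, at $p=2$, and at primes dividing $F(1,0)F(0,1)$, where the index $[\OO_K:\Z[\tilde\omega}]$ can be $2$), is indeed a secondary issue.

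The genuine gap is your opening identification of $\Psi^{(1)}_F(\mathcal{K}_F,x,B)$ with a constant multiple of $\Psi_K^{(1)}$, the count over \emph{all} primitive ideals. Coprime pairs $(a,b)$ parametrize, up to the (finite) unit group, only the primitive \emph{principal} ideals --- one fixed ideal class --- and when $|G_K|>1$ the statement that this class captures a fraction $1/|G_K|$ of the smooth primitive ideals is not a counting dictionary: it is the assertion that $B$-smooth norms equidistribute among ideal classes, uniformly in the stated range of $B$, and it is the analytic heart of the theorem. The paper proves it by writing, via orthogonality of the characters of $G_K$,
\begin{equation*}
\#\left\{\mathfrak{I}\text{ principal, primitive}: N(\mathfrak{I})\leq x,\ P(N(\mathfrak{I}))\leq B\right\}
=\frac{1}{|G_K|}\sum_{\chi\in\widehat{G_K}}\Psi^{(1)}(x,B;\chi),
\end{equation*}
then showing that each nontrivial character contributes only $O\bigl(x\rho(u)\exp(-(\log B)^{3/5-\varepsilon})\bigr)$, by a Selberg--Delange-type analysis of the twisted Dirichlet series in the spirit of \cite{Te90a} and \cite{FT91}; Theorem~\ref{corollaire smooth} itself is applied only to the trivial character. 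Your proposal never introduces the characters or any substitute for this estimate: you flag ``non-principal-class error terms'' in your last paragraph but treat them as bookkeeping to be controlled, with no mechanism offered. Without that ingredient your first step is unjustified (and, as an exact identity, false) for class number greater than one, so the argument as written only covers $h_K=1$.
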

\begin{proof}
Let $\omega$ be a root of $f(X)=F(X,1)$ and $K:=\Q(\omega)$. Since $\Disc(f)$ is a fundamental discriminant, we have $\Disc(K)=\Disc(f)$. Moreover, there exists a basis $(\omega_1,\omega_2)$ of $\mathcal{O}_K$ such that, for any integers $a$ and $b$, one has
\begin{equation*}
F(a,b)=\Norm(a\omega_1+b\omega_2).
\end{equation*}

Since $U_K$ is finite,  we have
\begin{align*}
\Psi^{(1)}_F(\mathcal{K}_F,x,B)
=&\#\left\{\omega=(a\omega_1+b\omega_2)\in \OO_K~:~(a,b)=1, \Norm(a\omega_1+b\omega_2)|\leq x, P(\Norm((a\omega_1+b\omega_2))\leq B
\right\}\\
= &\left|U(K)\right|
\#\left\{\mathfrak{I}\text{ principal ideal}~:~\mathfrak{I}\text{ is primitive}, N(\mathfrak{I})\leq x, P(N(\mathfrak{I}))\leq B
\right\}
\end{align*}
In order to pick up ideals from the class $\text{Cl}(\mathcal{O}_K)$, i.e. principal ideals, we can consider the group $\widehat{G_K}$ of the multiplicative characters of the class group $G_K$. By the orthogonality property of characters, we have 
\begin{align*}
\#\left\{\mathfrak{I}\in \text{Cl}(\mathcal{O}_K):N(\mathfrak{I})\leq x,\mathfrak{I}\text{ primitive},P(N(\mathfrak{I})))\leq B
\right\}
=\frac{1}{|G_K|}\sum_{\chi\in\widehat{G_K}}\Psi^{(1)}(x,B;\chi),
\end{align*}
where 
\begin{align*}
\Psi^{(1)}(x,B;\chi)=\sum_{
\substack{\mathfrak{I}\text{ primitive}\\ N(\mathfrak{I})\leq x\\P(N(\mathfrak{I}))\leq B
}}\chi(\mathfrak{I}).
\end{align*}
\textbf{Contribution of nontrivial characters:}

Since $\text{Cl}(p\mathcal{O}_K)$ is the identity element of the class group $G_K$, the inclusion-exclusion principle implies that
\begin{align*}
\sum_{\substack{\mathfrak{I}\text{ primitive}}}\frac{\chi(J)}{N(J)^s}
=
\sum_{\substack{\mathfrak{I}}}\frac{\chi(\mathfrak{I})}{N(\mathfrak{I})^s}
\left(\prod_p\left(1-\frac{1
}{p^{2s}}\right)\right)^{-1}.
\end{align*}
Consequently, we can adapt, step by step, the proof of Theorem \ref{corollaire smooth} to deduce that, for any $\varepsilon$ and uniformly for
\begin{align*}x\geq 3\quad\text{ and }\quad
\exp\left((\log\log x)^{5/3+\varepsilon}\right)\leq B\leq x,
\end{align*}
we have
\begin{align*}
\sum_{\substack{\mathfrak{I}\text{ primitive}}}\frac{\chi(\mathfrak{I})}{N(\mathfrak{I})^s}\ll
x\rho(u)\exp\left(-(\log B)^{3/5-\varepsilon}\right).
\end{align*}

This procedure is essentially made in \cite{Te90a} and \cite{FT91}.

\noindent\textbf{Contribution from the trivial character :}

For the principal character, denoted by $\chi_0$, we use Theorem \ref{corollaire smooth}. There exists $c>0$ such that, for any  $\varepsilon>0$, we have, 
 uniformly for \begin{displaymath}x\geq 3\text{ and }\exp\left((\log\log x)^{5/3+\varepsilon}\right)\leq B\leq x(\log x)^{-c},
 \end{displaymath} 
\begin{align*}
\Psi^{(1)}(x,B;\chi_0)=
  x\left(\gamma_0(K)\rho(u)+\gamma_1(K)\frac{\rho'(u)}{\log B}+
O\left(\rho(u)\left(\frac{\log(u+1)}{\log B}\right)^{2}
\right)\right),
\end{align*}
where $
\gamma_0(K)=\frac{6\lambda_{K}}{\pi^2}$
 and  
\begin{align*}
\gamma_1(K)=&\gamma_0(K)\left(
\gamma-1+
\sum_p\log p\left(
 \frac{1}{p-1}-\text{cont}_p(K)\right)\right).
\end{align*}

Using the decomposition of rational primes into ideals of $\mathcal{O}_K$ (see for example the discussion in Section~$6.4$ of \cite{Bu89}), we can note that
\begin{align*}
\#\left\{\mathfrak{I}\text{ primitive},N(\mathfrak{I})=p^k\right\}=
\left\{\begin{array}{ll}
0&\text{ if }p\mid\Disc(K)\text{ and }k\geq2,\\
n_p(K)&\text{ if }k=1\text{ or }p\nmid\Disc(K),
\end{array}\right.
\end{align*}and therefore
\begin{displaymath}
\cont_p(K)=\left\{\begin{array}{ll}
\frac{1}{p+1}&\text{ if }p|\Disc(K),\\
\frac{p}{p+1}\frac{n_p(K)}{p-1}&\text{ otherwise}.
\end{array}\right.\end{displaymath}
A careful study of $\cont_p(f)$ implies that we have actually
\begin{align}
\text{cont}_p(K)
=\text{cont}_p(f)\label{contK contp}
\end{align}
To see this, assume first that $p|\Disc(K)$. In view of the hypothesis on $\Disc(K)$, a straightforward computation implies that $n_{p}(f)=1$ and $n_{p^k}(f)=0$ for $k\geq 2$, and therefore Equation~(\ref{contK contp}) holds. We consider now primes $p$ which do not divide $\Disc(K)$, for which we must show that $n_p(f)=n_K(f)$ (Hensel's Lemma allows to obtain $n_{p^k}(f)=n_{p^k}(K)$ for $k\geq 2$). If $p$ does not divide $2F(1,0)F(0,1)$, since the index is $1$ or $2$, Dedekind's result states that $n_p(f)=n_p(K)$. If $p$ is an odd prime which divide $F(1,0)F(0,1)$, it is not difficult, using the decomposition of $p$ in $\mathcal{O}_K$, to see that $n_p(f)=n_p(K)=2$. If $p=2$ and (at least) one of $F(1,0)$ and $F(0,1)$ is even, then $\Disc(K)\equiv1\pmod{8}$, which implies that $n_2(K)=2$. But then $F(0,1)$ and $F(1,1)$ are even and one obtains $n_2(f)=2=n_2(K)$. Finally, if $p=2$ does not divide $F(0,1)$ nor $F(1,0)$, all the coefficients of $F$ are odd and then $n_2(f)=0$. Since, in this case, $\Disc(K)\equiv5\pmod{8}$, we have also $n_2(K)=0=n_2(f)$.
For the remaining primes, we have by Lemma \ref{Nagel} that $n_{p^k}(f)=n_p(K)$ for any $k\geq1$ which implies (\ref{contK contp}).  
 
 From this discussion, it finally follows that
 \begin{align*}
\Psi^{(1)}_F(\mathcal{K}_F,x,B)
  =&\frac{6\lambda_{K}}{\pi^2|G_K|}x\left(\rho(u)+(\gamma-1+\alpha(f) )\frac{\rho'(u)}{\log B}+O\left(\rho(u)\frac{(\log (u+1))^2}{(\log B)^2}\right)\right).
 \end{align*}
 
Using the standard Selberg-Delange's method instead of Theorem \ref{corollaire smooth} (see  \cite{Te95a}), we can also prove that, for any $\varepsilon>0$,   we have
 \begin{align*}
 \Psi^{(1)}_F(\mathcal{K}_F,x,x):=\#\left\{(a,b)\in\Z^2:(a,b)=1, |F(a,b|\leq x \right\}
 =\frac{6}{\pi^2|G_K|}x+O\left(x\exp\left(-\log x)^{3/5-\varepsilon}\right)\right).
 \end{align*}

From Theorem \ref{developpement asymptotique Saias}, we see also that  for any  $\varepsilon>0$ and
 uniformly for
 \begin{displaymath} x\geq3\text{ and }\exp\left((\log\log x)^{5/3+\varepsilon}\right)\leq B\leq x(\log x)^{-c},\end{displaymath} we have
 \begin{align*}
 \Psi(x,B)=x\left(\rho(u)+(\gamma-1)\frac{\rho'(u)}{\log B}+O\left(\frac{(\log(u+1))^2}{(\log B)^2}\right)\right).
 \end{align*} 
This enables us to estimate the right-hand term of Equation~\ref{cardinal K degre 2} and to deduce the result.
\end{proof}

\begin{rem}
The theorem above encompasses a large set of binary forms. For example, since the quadratic binary form $F=X_1^2+qX_2^2$ defined in Example~\ref{example} has fundamental discriminant and $\alpha(F)$ is positive, we know that asymptotically it has less smooth values than the random integers of same size. Nevertheless, many examples of binary forms $F'$ with good values of $\alpha(F')$ have non fundamental disciminants.
\end{rem}
\section{Conclusion and open questions}\label{sec:conclusion}
The results in this article establish a rigorous connection between Murphy's $\alpha$ and a polynomial's efficiency in NFS. On can improve the speed of the algorithm by studying $\alpha$ and, in particular, the following questions:
\begin{itemize}
\item What is the maximum value of $\alpha$ on a given set $E(d,\textbf{I})$? Indeed, if a polynomial with a good value of $\alpha$ is found, one can end the polynomial selection phase, reducing therefore the time spent in this phase of the algorithm.
\item Can one define a variance of $\alpha$? Indeed, experiments indicate that, uniformly on the ideals products $\textbf{I}$, the distribution of the values of $\alpha$ on a set $E(d,m,\textbf{I})$ converges to a Gaussian distribution when $m$ tends to infinity. If one can define and compute the variance of $\alpha$, one will be able to find a good trade-off between the time spent to select a good polynomial and the time used to collect relations using that polynomial.

\end{itemize} 

\bibliographystyle{alpha}
\bibliography{biblio}
\end{document}